\useunder{\uline}{\ul}{}
\newtheorem{claim}{Claim}
\newtheorem{theorem}{Theorem}
\newtheorem{lemma}[theorem]{Lemma}
\newenvironment{proof}[1][Proof]
\newcommand{\Xomit}[1]{}
\newcommand{\ssend}{{\sf{send}}}
\begin{document}
	
\title{Some New Results With $k$-Set Agreement}
	
\author[1]{Carole Delporte-Gallet}
\author[1]{Hugues Fauconnier}
\author[1,2]{Mouna Safir}
	
\affil[1]{IRIF, Univeristé Paris Cité}
\affil[2]{UM6P-CS, Université Mohammed VI Polytechnique }

\date{}
\maketitle

\begin{abstract}
In this article, we investigate the solvability of $k$-set agreement among $n$ processes in distributed systems  prone to different types of process failures. 
Specifically, we explore two scenarios: synchronous
message-passing systems prone to up to $t$  Byzantine failures of
processes.
And  asynchronous  shared memory systems prone to up to $t$ crash
failures of processes.
Our goal is to address the gaps left by previous works\cite{SSS,AsynchKset} in these areas. For Byzantine failures case we consider systems with authentication where processes have unforgeable signatures.

For synchronous message-passing systems, we present an authenticated
algorithm that achieves $k$-set agreement in only two rounds, with no
constraints on the number of faults $t$, with $k$ determined as $k \geq
\lfloor \frac{n}{n-t} \rfloor + 1$. In fact the lower bound for
$k$, for the Byzantine case  is $k \geq
\lfloor \frac{n}{n-t} \rfloor $ that is obtained by an algorithm based
on traditional consensus with $t+1$ rounds.

In  asynchronous  shared memory systems, we introduce an algorithm that accomplishes $k$-set agreement for values of $k$ greater than $ \lfloor \frac{n-t}{n-2t} \rfloor +1$. This algorithm uses a snapshot primitive to handle crash failures and enable effective set agreement.

\end{abstract}

\smallskip

\textbf{Keywords:} Byzantine failures, Crash failures, Distributed systems,  $k$-set agreement.

\section{Introduction}

The consensus problem is an abstraction of many coordination problems in a distributed system that can suffer process failures. Roughly speaking, the consensus problem is to have processes of a distributed system agree on a common decision. Because of the many practical problems that can be reduced to this simple primitive, consensus has been thoroughly studied. We refer the reader to \cite{consensus} for a detailed discussion of consensus. Motivated by the significance of consensus, researchers have explored variations of the problem to investigate the boundaries of what is possible and impossible. One such variation is the $k$-set consensus \cite{chaudhuri1993more}, which relaxes the safety conditions of consensus to allow for a set of decision values with a cardinal of up to $k$ (compared to $k=1$ in consensus). The $k$-set agreement problem has been widely studied in the field of distributed computing \cite{raynal2010fault}. Beyond the practical interest of this problem, particularly regarding fault-tolerant distributed computing, one of the main reasons behind the focus on $k$-set agreement problem is the fact that it can be used to define and compare computational power properties of systems.

In $k$-set agreement, each process must decide on a value such that no more than $k$ different values are decided by processes.
In addition, the processes must guarantee a validity condition which characterizes which decision values are allowed as a function of the input values and whether failures occur.

Hence, with crash process failures, the validity condition generally considered ensures that the decided values are initial values proposed by processes. Regarding $k$-set agreement in asynchronous models one of the most famous (and difficult) results is the extension of the consensus impossibility result to impossibility of the $k$-set agreement~\cite{BG93,HS93,saks2000wait} when at least $k$ processes may fail. In synchronous models with crash failures, $k$-set agreement is solvable for all $k$. But interestingly, an (imperfect) agreement on more than one value will divide the complexity in the number of rounds needed to solve the $k$-set agreement: as proved in ~\cite{CHLT00}, 
$\lfloor t/k \rfloor +1$ rounds of communication are necessary and sufficient to solve $k$ set agreement with no more than $t$ faulty processes. Note that these results depends on the chosen validity condition.
Note also that an important interest of the $k$-set agreement is its universality in the sense where the $k$-set agreement allows the state machine (with some liveness conditions) replication~\cite{GG11}. 

With more severe failures than crashes, initial values of faulty processes has no real meaning (what is the initial value of a Byzantine process?). Then other validity properties have been defined for the $k$-set agreement. The work presented in \cite{AsynchKset} investigates the $k$-set consensus problem, considering various problem definitions and models with both crash and Byzantine failures, and shows that the precise definition of the validity requirement is crucial to the solvability of the problem. The authors of \cite{AsynchKset}  envisage up to six validity properties. Among them the strong validity ensuring that if all correct processes have the same initial value $v$ then all correct decide $v$, is the more appropriate for Byzantine failures. 

In this work, we aim to investigate the solvability of $k$-set agreement in distributed systems prone to different types of failures with this validity condition ensuring that if all correct processes have the same initial $v$ all correct processes decide $v$.
Specifically, we focus on two scenarios: synchronous message-passing systems prone to Byzantine failures
% denoted as $\mathcal{BSMP}{n,t}$,
and  asynchronous shared memory systems prone to crash failures.
%, denoted as $\mathcal{CARW}{n,t}$.
Concerning Byzantine failures we consider models with authentication in which messages may be signed by processes with unforgeable signatures. Most of the results concerning general Byzantine failures are already shown in~\cite{bouzid2016necessary,SSS}. Our objective is to address gaps left by previous works in these areas and provide insights into the solvability of $k$-set agreement under specific failures models and give some results in term of rounds complexity.

For synchronous message-passing systems, the $k$-set agreement is possible to achieve only when $k \geq \lfloor \frac{n}{n-t} \rfloor$.
We present an authenticated algorithm that achieves $k$-set agreement in only two rounds, with no constraints on the number of failures $t$, but the value of $k$  is greater or equal to $\lfloor \frac{n}{n-t} \rfloor + 1$ and hence is not optimal. To achieve an optimal $k$, we propose an algorithm that spans $t+1$ rounds and guarantees $k$-set agreement for $k = \lfloor \frac{n}{n-t} \rfloor$ for any value of $t$. This algorithm leverages $n$ instances of the Terminating Reliable Broadcast (TRB), where the delivered value represents the proposed value for the set agreement. 
This result is interesting: if $k$ is the optimal value for $k$-set agreement, we have $t+1$ rounds 
to achieve the $k$-set but only 2 rounds for the $(k+1)$-set. That means for example that only 2 rounds are needed for the 2-set agreement. Note also that these results apply to crash failure models with this considered validity property.

In asynchronous shared memory  systems, we propose an algorithm that accomplishes $k$-set agreement for values of $k$ strictly greater than $\lfloor\frac{n-t}{n-2t}\rfloor$. This algorithm effectively handles crash failures using a snapshot primitive.

% Throughout the paper, we present explanations of the TRB algorithm and  the snapshot primitive, along with their relevant properties that enable us to prove the correctness of our algorithms.

In summary, our work contributes to the understanding of $k$-set agreement in distributed systems, providing valuable insights into the solvability of this problem under specific failure models. We address important gaps in existing research and offer practical solutions to achieve $k$-set agreement in both synchronous and asynchronous distributed systems with different types of failures. The rest of the paper is organized as follows: Section 2 presents the model and some preliminary results as a bound on the value ok $k$ enabling $k$-set Agreement,  Section 3 presents an only two rounds algorithm for $k$-set Agreement for $k>\lfloor \frac{n}{n-t} \rfloor$ and a $t+1$ rounds algorithm for $k \geq \lfloor \frac{n}{n-t} \rfloor$ in synchronous message passing model with Byzantine failures and authentication, Section 4 presents results in asynchronous shared memory model. Finally, Section 5 concludes the paper and discusses future research directions.
%%% Local Variables:
%%% mode: latex
%%% TeX-master: "main"
%%% End:

\section{Preliminaries} \label{sec:prelims}
In this section, we provide a detailed explanation of the communication model and failure models used in our study, as well as an overview of two essential primitives, the Terminating Reliable Broadcast (TRB), and the Snapshot primitive. 

In the following $n$ denote the number of processes and $t$ the maximum number of faulty processes.
\subsection{Communication Model}
We consider systems with $n$ processes with at most $t$ processes may be faulty. There is no communication failure. We first describe the communication models.
\subsubsection{Message Passing Model}
In the \emph{message passing model}, we consider a system consisting of $n$ processes that communicate by sending and receiving messages over a complete point to point communication network
without communication failure.
Any message sent by a correct process is eventually received by its receiver.

In \emph{synchronous} message passing model, the messages are guaranteed to arrive within a bounded time interval $\Delta$ to their receiver. In the following, in the synchronous message passing model processes run in synchronized rounds: at each round the processes send messages that are received in the same round.
On the other hand, in \emph{asynchronous} model, there are no such timing guarantees.

\subsubsection{Shared Memory Model}
In the \emph{shared memory model}, processes communicate by reading from and writing to shared (atomic) registers. 
%In the sequel, we assume without loss of generality
%there is a set of shared registers labeled $A[1 \cdots n]$, and each process $i$ can only write to register $A[i]$ while being able to read all other registers.

\subsection{Failure Models}
Furthermore, the system can be susceptible to process failures.  Here, we consider two types of process failures.  The first one is the \textbf{Crash failures} where a process simply stops its execution. The second type is  the \textbf{Byzantine failure} where a process may arbitrarily deviate from its protocol specification. Note that a crash is a special case of Byzantine failure.
Assuming that runs are infinite a faulty process for crash failure make a finite number of steps.
A process is \emph{correct} in a run if it does not fail at any point of its execution.

%In the following $n$ denote the number of processes and $t$ the maximum number of faulty processes.
\subsubsection{Encryption Scheme}
To ensure authentication, we employ a public key encryption scheme, where each process possesses a signing (private) key $sk_i$ and knows the public key $pk_j$ of every other process $p_j$. A process can sign a message $m$ using its private key as $\sigma = \text{sign}(sk_i, m)$. We assume a perfectly secure signature scheme, ensuring that no signature of a correct process can be forged. A process can also forward a received message from process $p_j$ by adding its own signature to the message.
\subsection{Two Useful Primitives}
In sections~\ref{sec:auth} and \ref{sec:RW} we present algorithms that solve $k$-set agreement. Our solutions rely mainly on  two primitives \textbf{Terminating Reliable Broadcast (TRB)} for Byzantine failure with authentication and the \textbf{Snapshot} primitive for the shared memory model.

\subsubsection{Terminating Reliable Broadcast}

 A  TRB protocol typically organizes the system into a sending process and a set of receiving processes, which includes the sender itself. A process is called 'correct' if it does not fail at any point during its execution.\\ The goal of the protocol is to transfer a message from the sender to the set of receiving processes, at the end of TRB a process will 'deliver' a message by passing it to the application level that invoked the TRB protocol.\\
In order to tolerate arbitrary failures, the TRB protocol is enriched with authentication so that the ability of a faulty process to lie is considerably limited, and also detected by correct processes; thus deliver a "sender faulty" message. This protocol then works for any number of faulty processes.

Consider a set of value $V$, and a special value $SF$ (for sender faulty).  
A TRB protocol is a protocol of broadcast value with a process  $p$  being the sender, and making a \emph{TRB-bcast(v,p)}  for some $v\in V$.  All the  correct processes deliver a value $m$ by a \emph{TRB-deliver(p)}, where $p$ is the sender of the signed message $m$, in such a way to satisfy the following properties:

\begin{itemize}
	\item \textbf{Termination.} Every correct process delivers some value. \label{ter-trb}
	
	\item \textbf{Validity.} If the sender, $p$,  is correct and broadcasts a message $m$, then every correct process delivers $m$.\label{val-trb}	\item \textbf{Integrity.}  A process delivers a message at most once, and if it delivers some message $m \neq SF $, then $m$ was broadcast by the sender. \label{int-trb}
	\item \textbf{Agreement}. If a correct process delivers a message $m$, then all correct processes deliver $m$. \label{agr-trb}
\end{itemize}
The main idea of the algorithm for solving the TRB is the following. If $p_0$ the sender wants to broadcast a value $m$,  it signs this value then send it. When $p_1$ a process that receives that message,it signs the received message and forwards it to the next process $p_2$ and so on
and so forth until a process $p_i$ receives that message, and it signs it. We represent such a message as $m:p_0:p_1:\ldots :p_i$. With $ m : p_0$ being the result of $p_0$ signing $m$.

When a correct process receives a message $m:p_0:p_1:\ldots :p_i$, this message should be valid before the process extracts $m$ from it. We say that a message is \emph{valid} if (i) all processes that
have signed the message are distinct, and has the form $m:p_0:p_1:\ldots :p_i$, note that the valid messages are the only one that 'count', in the sense that all the other non-valid messages are ignored by correct processes.

If $m:p_0:p_1:\ldots :p_i$ is valid:
\begin{enumerate}
	\item The process extracts the value $m$ from the message, then
	\item It relays the value if didn't do before, with its own signature appended
\end{enumerate}

At round $t+1$ :
if the process has extracted exactly one message, it delivers it otherwise it delivers \textbf{SF}. 

As a matter of fact TRB is solvable in synchronous models with Byzantine failures and authentication \cite{dolev1983authenticated,SrikanthT87}, and we recall \cref{TRB-broad}, where $p_0$ is the sender ensuring that.

\begin{algorithm2e}[!htb]

		\SetAlgoLined
		
		\underline{\emph{TRB-bcast($v_0$,$p_0$)}:}\\  	
		$	m := v_0 $\\ extracted := $m$\\
		
		\underline{\emph{TRB-deliver($m:p_0$)}}\\ 
		
		\Comment{--In Round 1--}
		sign $m$ and  $\ssend$ $m : p_0$ to all;\\

		\underline{\textit{At the end of round $t+1$}}\\ 					
		\eIf{$\exists m$  s.t extracted = $\{m\}$}{deliver $m$ }{deliver \textbf{SF}; } 
		
		\caption{Algorithm Of TRB, with $p_0$ being the sender.}
		\label{TRB-broad}

\end{algorithm2e}

And we recall \cref{TRBalgo}, where $p_0$ sends $v_0$.

\begin{algorithm2e}[H]
	
		\SetAlgoLined						
		
		\underline{\emph{TRB-deliver($p_0$)}}\\ 
		
		In round $i$, $1 \le i \le t+1$  ; \\
		\ForEach{signed message $s'   \in relay $}{ sign $s'$ and  $\ssend$ $s' : p$ to all}  
		
		Receive round $i$ messages from all processes ;\\
		relay := $\emptyset$ ;\\
		
		\ForEach{  valid message $s' = m:p_0:\dots:p_i$ received in  round $i$ }{ 
			\If{$m \notin$ extracted}{ extracted := extracted $\cup$ \{$m$\} ; \\  relay := relay $\cup$ \{$s'$\}  }
			
		}

		\underline{\textit{At the end of round $t+1$}}\\ 					
		\eIf{$\exists m$ $\in \mathcal{M}$ s.t extracted = $\{m\}$}{deliver $m$ }{deliver \textbf{SF}; } 
		
		\caption{Algorithm of TRB with sender $p_0$.\\ Code of a process $p$ with $p \neq p_0$.}
		\label{TRBalgo}

\end{algorithm2e}

\subsubsection*{Proof of TRB}

\begin{claim}
	If the sender is correct and broadcasts a message $m$, then every correct process delivers $m$.
	\label{clai:broaDel}
\end{claim}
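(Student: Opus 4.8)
The plan is to establish this validity property in two complementary parts: first, that every correct process places $m$ into its \emph{extracted} set; and second, that no correct process ever places any other value there. Together these give that at the end of round $t+1$ each correct process observes extracted $=\{m\}$, a singleton, and therefore executes the first branch and delivers $m$.

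First I would handle the positive direction using synchrony. Since the sender $p_0$ is correct, during round $1$ it signs $m$ and sends the message $m:p_0$ to all processes (and, by the bcast code, initializes its own \emph{extracted} to $\{m\}$). The message $m:p_0$ is \emph{valid}: it has the prescribed form, and its single signer $p_0$ is trivially distinct. In the synchronous model every message sent by a correct process in a round is received in that same round, so every correct process receives $m:p_0$ in round $1$, finds that $m$ is not yet in its \emph{extracted} set, and thus adds $m$ to it. Hence $m$ belongs to extracted at every correct process.

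Next I would show that no correct process ever extracts a value $m'\neq m$. A correct process extracts a value only from a \emph{valid} message, and by the definition of validity such a message has the form $m':p_0:\dots:p_j$ whose first signer is the sender $p_0$; producing it therefore requires $p_0$'s signature on $m'$. Because the signature scheme is perfectly secure, no correct process's signature can be forged, and the correct sender $p_0$ signed only $m$. Consequently no valid message carrying a value $m'\neq m$ can ever exist, so the only value any correct process can extract is $m$, giving extracted $=\{m\}$ at every correct process throughout the run, in particular at the end of round $t+1$.

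Combining the two parts, every correct process reaches the terminal test with extracted equal to the singleton $\{m\}$, and hence delivers $m$, as claimed. I expect the delicate point to be the second part: it rests entirely on the unforgeability of signatures to rule out a faulty process injecting a valid-looking message falsely attributed to the correct sender, so I would make the appeal to the perfectly secure signature assumption explicit there, rather than treating it as routine.
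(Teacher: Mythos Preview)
Your proof is correct and rests on the same two ingredients the paper uses: synchronous receipt of $m:p_0$ in round~$1$, and unforgeability of $p_0$'s signature to rule out any valid message of the form $m':p_0:\dots$ with $m'\neq m$. Your version is in fact more complete than the paper's: the paper's proof of this claim, as written, argues only that the \emph{sender} itself extracts exactly $\{m\}$ and stops there, leaving the extension to all correct processes implicit (presumably to be combined with the subsequent claim). You instead establish directly that extracted $=\{m\}$ at \emph{every} correct process, which is what the claim actually states; this makes your argument self-contained where the paper's is not.
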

\begin{proof}
	If the sender is correct and wants to broadcast $m$, by definition it extracts $m$ (and no other value) in 'round' 0. It does not extract any other value in any round $i > 0$, because to do so it would have received a valid message $m' : p_0 : \dots : p_i$ where $p_0 = sender$, which contradicts the unforgettable property of authenticated messages. Thus, if the sender is correct, it extracts only the message it broadcast.
\end{proof} 

\begin{claim}
	If a correct process extracts $m$
	% \in \mathcal{M}$
	then all correct processes will extract $m$.
	
\end{claim}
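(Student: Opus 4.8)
The plan is to prove the claim by induction on the round number, tracking when a correct process first extracts the message $m$ and showing this extraction propagates to all other correct processes within one additional round. The key mechanism is the relaying step: whenever a correct process extracts a value for the first time, it appends its signature and sends the resulting valid message to all processes in the next round.

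First I would set up the induction. Suppose some correct process $p$ extracts $m$ for the first time in round $i$. Extraction happens because $p$ received a valid message $s' = m:p_0:p_1:\dots:p_i$ and $m \notin$ extracted at that point. By the code, $p$ then adds $s'$ to its relay set, so in round $i+1$ it signs $s'$ and sends $s' : p = m:p_0:\dots:p_i:p$ to all processes. I must handle the boundary case $i = t+1$ separately, since there is no round $t+2$ in which to relay; this is exactly where the signature-counting argument on the message length becomes essential.

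The crux of the argument is the length/pigeonhole bound on valid messages. A valid message $m:p_0:\dots:p_i$ requires $i+1$ \emph{distinct} signers, so by round $t+1$ any valid message carries at most $t+1$ signatures. I would argue that if a correct process $p$ is the first to extract $m$ in round $i$, then among the signers $p_0,\dots,p_i$ at least one, but not necessarily $p$ itself, need not be correct; the point is that $p$ relays in round $i+1 \le t+1$ only when $i \le t$. To rule out the problematic case where $m$ is first extracted by a correct process precisely at round $t+1$ (too late to relay), I would show that the chain $m:p_0:\dots:p_{t+1}$ would contain $t+2$ distinct signers, which is impossible since there are at most $t$ faulty processes and the message must have passed through at least one correct process earlier. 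That correct intermediate process would have extracted $m$ at an earlier round and relayed it, contradicting minimality of $i$. Hence the first correct extraction occurs at some round $i \le t$, leaving room to relay.

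Once relaying is guaranteed, the conclusion follows: every correct process $q \neq p$ receives the valid message $s':p$ in round $i+1$, and either $q$ has already extracted $m$ (in which case we are done for $q$) or $q$ extracts it upon receiving $s':p$. Either way every correct process extracts $m$ by the end of round $t+1$. I expect the main obstacle to be the careful boundary analysis at round $t+1$: making rigorous the pigeonhole argument that no correct process can be \emph{first} to extract $m$ at the final round, so that every first-extraction by a correct process always admits a subsequent relay round. The rest is routine bookkeeping over the relay and extracted sets.
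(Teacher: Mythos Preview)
Your approach is essentially the paper's: take the earliest round $i$ at which a correct process extracts $m$, use the distinct-signer requirement together with the bound of $t$ faulty processes to conclude $i \le t$ (the paper phrases this as ``all signers $p_0,\dots,p_i$ must be faulty, hence $i\le t$'', which is the same pigeonhole you invoke at the boundary), and then relay in round $i+1 \le t+1$ so every correct process extracts $m$. The only minor omission is the base case $i=0$, where the correct sender extracts $m$ by initialization rather than by receiving a valid message; the paper treats this separately, but it is trivial and your argument otherwise matches.
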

\begin{proof}
	Let  $i$ be the earliest round in which some correct process
	extracts $m$ and let $p$ be such a
	process.\\ 
	\textbf{Base case:}  If $i=0$, then $p_0$ is the sender, and it will send $m :
	p_0$ to all processes in round 1, and all other correct
	processes will extract $m$ in that round. Thus, all correct
	processes will extract $m$, as wanted.
	From \cref{clai:broaDel}, no correct process will extract a message from $p_0$ if $p_0$ did not send it.
	\\
	
	Thus, we may assume that $i>0$. Process $p$ extracts $v$ because
	it has received a valid message $m : p_0 : \dots : p_i$ in round
	$i$. By the definition of the valid message, $p_0, \dots, p_i$
	are all distinct.\\
	We claim that all the processes in the sequence $p_0, \dots, p_i$ are faulty. We suppose, for contradiction, that $p_j$ is correct, for some $j$ such
	that $1\le j \le i$. Since the signature of a correct process
	cannot be forged, it follows that $p_j$ signed and relayed the
	message  $m : p_0 : \dots : p_j$ in round $j$. Since $p_j$ is
	correct it extracted $m$ in round $j-1 < i $.  Which contradicts the assumption that $i$ is the earliest round a correct process extracts $m$. Thus, $p_0, \dots, p_i$ are distinct
	and faulty; hence $i\le t$. 
	
	Therefore, $p$ will send a valid message $m : p_0 : \dots :
	p_i : p$  to all processes in round $i+1 \le t+ 1$. 
	All correct processes will receive that message and will
	extract $m$ in round $i+1$ if they have not done so
	already. Thus, all correct processes extract $m$, as wanted. 
\end{proof}
From the claim, it follows that all correct processes extract the same set of values. Thus they all deliver the same message, proving Agreement.

The termination is trivial: The sender $p_0$  delivers in round 0, and every other correct process  will deliver a message at the end of the $t+1$ rounds

\begin{claim}
	A process delivers a message at most once, and if it delivers some message $m \neq SF $, then $m$ was broadcast by the sender.
\end{claim}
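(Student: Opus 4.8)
The plan is to verify the two halves of Integrity separately, since both follow from the structure of \cref{TRBalgo} (and \cref{TRB-broad} for the sender) together with the unforgeability of signatures assumed in the encryption-scheme model. First I would dispose of the ``at most once'' part by a purely structural observation: the only statement that produces a delivery is the final \textbf{if-then-else} block executed \emph{at the end of round} $t+1$. Each process reaches this block exactly once, and in either branch performs a single \emph{deliver} (of some $m$, or of \textbf{SF}). Hence no process delivers more than one value, which settles this half with no further argument.

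For the substantive half, I would argue as follows. Suppose a correct process $p$ delivers some $m \neq SF$. By the delivery rule this forces $\text{extracted} = \{m\}$ at the end of round $t+1$, so in particular $m \in \text{extracted}$. A value enters the set \emph{extracted} only when $p$ processes a \emph{valid} message of the form $m : p_0 : \dots : p_i$; by the definition of validity such a message carries, as its innermost component, the signature $m : p_0$ of the sender on $m$. The key step is then to invoke unforgeability: since no one but $p_0$ can produce a well-formed signature $m : p_0$, the block $m : p_0$ must have been created by $p_0$ itself, i.e. $p_0$ signed $m$ in \emph{TRB-bcast}. Therefore $m$ was broadcast by the sender, as required.

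The hard part will not be any calculation but correctly handling the case of a \emph{faulty} (Byzantine) sender. A Byzantine $p_0$ may sign and disseminate several distinct values; the point to make explicit is that this does not violate the claim, because two such values would both be inserted into \emph{extracted}, the guard $\text{extracted} = \{m\}$ would then fail, and $p$ would deliver \textbf{SF} rather than a value $\neq SF$. Thus the only way a value $m \neq SF$ survives to delivery is that $m$ was genuinely signed (hence broadcast) by $p_0$, and the whole argument rests squarely on the unforgeability assumption. I would also note at the outset that the statement is understood to quantify over correct processes, since nothing can be guaranteed about what a faulty process ``delivers.''
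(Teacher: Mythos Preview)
Your proposal is correct and follows essentially the same approach as the paper: the ``at most once'' part is handled by the single delivery statement at the end of round $t+1$, and the non-trivial half is argued via the form of valid messages together with unforgeability of the sender's signature. Your write-up is in fact more explicit than the paper's (which compresses both parts into two sentences), and your extra remark about a Byzantine sender signing several distinct values is a helpful clarification though not strictly needed for the claim as stated.
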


\begin{proof}
	At the end of the $t+1$ rounds, if a correct process delivers a message $m$, then the set of extracted contains only one message $m$.
	If a correct process delivers $m$, then $m$ was extracted from a valid message, signed and broadcast by the sender, in case of $p$ being the sender, it stores $m$ in extract in round 0, and delivers it.
\end{proof}

\subsubsection{Snapshot\label{append-snapshot}}

The snapshot object was introduced in~\cite{snapshot} as a shared data structure allowing concurrent processes to store
information in a collection of shared registers. It can be seen as an initially empty set, which can then contain up to $n$ values (one per process). This object provides two operations denoted $update()$ and $snapshot()$.  
The invocation $update(m)$ by a process $p_i$ writes in process $p_i$'s register the value $m$.
The invocation $snapshot()$ by a process $p_i$ reads process $p_j$'s register and returns its content, which we denote as $view_i$.

%The snapshot satisfies inclusion properties, each process gets back its own value, and the value is written by other processes. 

%Each process $p_i$ manages a local array $reg_i[1, \dots, n]$ which contains its current view on the snapshot object. This array is initialized to $[\perp, \dots, \perp]$, 

We consider an atomic snapshot object  i.e the snapshot object satisfies the following properties:
\begin{itemize}
	\item \textbf{Termination.} The invocation of $snapshot()$ or $update()$ by a correct process terminates.
	 \item \textbf{Atomicity Property.} The snapshot  and update operations are  atomic, meaning that they appear to execute instantaneously and they satisfy the sequential specification of the snapshot. 
	
\end{itemize}

By enforcing Atomicity on the snapshot operation, each process can obtain a consistent and unchanging view of the shared registers, preventing any concurrent modifications that could lead to data corruption or incorrect decisions. 

The snapshot object satisfies the following property when the code of each  process is  first the invocation of update for some value (its local value)  then any number of invocation to snapshot.
\begin{itemize}

	 \item \textbf{Inclusion Property.} (1) When a process $p_i$ takes a snapshot, the resulting view $view_i$ includes the local value of $p_i$.
	 (2) For any process $p_j$, if the snapshot of $p_i$ occurs before the snapshot of $p_j$, then $view_i$ is a subset of $view_j$.
	
\end{itemize}

\subsection{$k$-Set Agreement}
In $k$-set agreement, each process must decide on a value such that no more than $k$ different values are decided by correct processes. More precisely, 
let $V$ be a finite set of  at least $k+1$ values. Each process has an
\emph{initial} value $v$ in $V$ and we say that {\it proposes} $v$ to $k$-set Agreement.
Each process has to irrevocably {\it decide} on a value in $V$. The decided values must satisfy the following properties. 
\begin{itemize}
	\item \textbf{Validity.}
	If all the correct processes propose the same initial value $v$,
	no correct process decides a value different from $v$.   
	\item\textbf{ Agreement.}
	At most $k$ different values are decided by the correct processes. 
	\item \textbf{Termination.} Eventually, all the correct processes decide.
        \end{itemize}

 We say that an algorithm solves $k$-set Agreement in a system of $n$ processes
with at most $t<n$ failures of processes, if 
all the executions in this system satisfy these properties. When $k=1$, $k$-Set Agreement is the classical \emph{consensus}.     
Several non equivalent validity properties for $k$-set agreement have been  proposed and argued  \cite{AsynchKset}. The validity considered here is generally the one used in the Byzantine case. 
More recently, \cite{CivitGGKV23} argues on the possibilities and impossibilities of various validity in the context of consensus.
 
Remark that the validity property given here is stronger than the validity property generally given for crash failures in which a decided value has only to be one of the initial of processes (correct or faulty). Hence, in some case, a decided value would come from a faulty process, that can be acceptable when processes are not malicious as with crash failures. Moreover, in the Byzantine case, it is not clear what is the initial value of a Byzantine process (any value, no value?), and such a weak validity condition would lead to decide any value in all cases.

Note that, if applied to crash failure model, it could also be interesting from a practical point of view to  force the decided value when all correct processes propose the same value.

\subsection{A Lower Bound For Number For $k$-Set Agreement}
We demonstrate that the lower bound for the $k$-set agreement is $k = \lfloor\frac{n}{n-t}\rfloor$, with $k$ an integer, which implies that no lower value of $k$ is achievable in the system, with $k> 1$, when we have $t$ processes that fail, let us proceed with the formal proof.

\begin{theorem}
	$\lfloor\frac{n}{n-t}\rfloor$ is a lower bound for solving $k$-set agreement
	\label{th:lb}
\end{theorem}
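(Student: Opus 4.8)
The plan is to argue by contradiction: suppose some algorithm solves $k$-set agreement with $k=\lfloor\frac{n}{n-t}\rfloor-1$, and exhibit a single execution in which the correct processes decide $\lfloor\frac{n}{n-t}\rfloor$ distinct values. Write $m=\lfloor\frac{n}{n-t}\rfloor$. The arithmetic fact I will use is that $m$ is the largest integer with $m(n-t)\le n$; in particular $m(n-t)\le n$, so I can carve out $m$ pairwise disjoint groups of processes $G_1,\dots,G_m$, each of size at least $n-t$ (assign $n-t$ processes to each group, consuming $m(n-t)\le n$ processes, which is feasible). The size condition is exactly what makes the rest go through: the complement of any $G_j$ contains at most $n-(n-t)=t$ processes, so each $G_j$ is eligible to be the \emph{entire} set of correct processes while all the remaining processes are declared faulty.

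Next I fix $m$ distinct values $v_1,\dots,v_m\in V$ (available since the definition of the problem assumes $|V|\ge k+1=m$) and define a reference run $R$ in which every process is correct and each process of $G_j$ proposes $v_j$. For each index $j$ I then define a companion run $R_j$ whose set of correct processes is exactly $G_j$ and whose faulty processes are all the others; in $R_j$ the faulty processes behave exactly as their honest counterparts do in $R$, sending the very same messages with the very same valid signatures. By construction every process of $G_j$ receives in $R_j$ precisely the messages it receives in $R$, so $R$ and $R_j$ are indistinguishable to the members of $G_j$, and they decide identically in the two runs.

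The decisive step applies the \textbf{Validity} property inside each $R_j$: there all correct processes, namely $G_j$, propose the same value $v_j$, so no correct process may decide a value different from $v_j$, and together with \textbf{Termination} every process of $G_j$ decides $v_j$ in $R_j$, hence also in $R$ by indistinguishability. Carrying this out for all $j=1,\dots,m$ shows that in the single run $R$ --- in which \emph{all} $n$ processes are correct, so at most $t$ are faulty --- the correct processes collectively decide the $m$ distinct values $v_1,\dots,v_m$. This contradicts \textbf{Agreement} for $k=m-1$, and therefore $k\ge m=\lfloor\frac{n}{n-t}\rfloor$ is necessary.

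The point I would treat most carefully, and which I expect to be the only real obstacle, is the legitimacy of the runs $R_j$ and the claim that authentication gives the algorithm no help. I must check that a Byzantine process which merely replays its honest $R$-behaviour is an admissible faulty behaviour (it is, since a faulty process may emit any messages, in particular the honest ones) and that, because such a process signs exactly the messages an honest process would, the unforgeability assumption is never invoked against the adversary, so signatures do not let the correct processes of $G_j$ distinguish $R_j$ from $R$. I would also flag the boundary case $t<n/2$, where $m=1$ and the statement degenerates to the trivial bound $k\ge 1$; the substance of the theorem thus lies in the regime $t\ge n/2$.
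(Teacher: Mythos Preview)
Your proof is correct and follows essentially the same partition-and-indistinguishability argument as the paper: both carve the processes into $\lfloor n/(n-t)\rfloor$ groups of size at least $n-t$, use validity in companion runs where a single group is the full correct set to force each group's value, and transfer the decisions back to one reference run by indistinguishability. Your version is more carefully written (the arithmetic justification for the partition, the explicit treatment of authentication, and the boundary remark are all absent from the paper's proof), but the underlying idea is the same.
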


\begin{proof}
	
	We consider a system with $n-t$ processes, and $t$ faulty. Let us partition the processes in $k = \lfloor\frac{n}{n-t}\rfloor$ subsets  $g_1, g_2, \ldots, g_{\frac{n}{n-t}}$ of size at least $n-t$.\\
	
	Let us consider a run $\alpha$ where in each $g_i$ all processes have initial value $v_i$, and there is at least a non-faulty process $p_i$, and all the faulty processes crash after all the correct processes have decided, (we can also consider that there is no crash).
	Moreover, let us suppose that the values $v_i$ are pairwise distinct.
	
	Let us consider a run $\alpha_{1}$, where all the processes in $g_1$ are correct and have initial value $v_1$. All the processes in $g_j$, for $j\neq 1$; have initial value $v_j$ and crash after all the correct processes have decided. As the processes in $g_1$ need to ensure validity they need to decide $v_1$.\\
	Notice that process $p_1 \in g_1$ cannot distinguish between $\alpha$ and $\alpha_{1}$. Thus, for any decision algorithm it has to decide $v_1$ in both runs.
	
	Generalizing this argument for any $p_i$, at least $k = \lfloor\frac{n}{n-t}\rfloor$  different values are decided in $\alpha$ by any decision algorithm.
\end{proof}

 As we give in section~\ref{sec:auth} an algorithm for $k$-Set
 Agreement in the Byzantine case with authentication for  $k \geq
 \frac{n}{n-t}$, we deduce that the $k \geq \frac{n}{n-t}$ is a lower
 bound to get $k$-set Agreement.

%%% Local Variables:
%%% mode: latex
%%% TeX-master: "main"
%%% End:

%\input{consensus}
\section{$k$-Set Agreement in an authenticated  Synchronous Message
  Passing Model \label{sec:auth}}
\label{sec:cont}

In this section, we explore $k$-Set agreement in a message-passing model
with authentication.
In fact all these results apply to crash failures models too.
Our focus is on ensuring reliable communication between processes
through the exchange of \textbf{authenticated} messages in the
TRB algorithm. Each received message is signed by the sender, guaranteeing its
authenticity. Furthermore, we assume that the communication is
reliable, meaning that messages are neither lost, forged, nor
generated by the network.

In a first step, we present a $k$-set agreement algorithm in two
rounds, with no constraints on the number of failures $t$. But with a
value of $k$ such that $k > \lfloor \frac{n}{n-t} \rfloor $ which
is not an optimal value for $k$. In a second step we
give an optimal algorithm concerning the value of $k$ for $k \ge \lfloor \frac{n}{n-t} \rfloor $, where this algorithm needs $t+1$ rounds. 

% Our objective is to present an authenticated algorithm that achieves $k$-set agreement in two rounds, with no constraints on the number of failures $t$. The value of $k$ is determined as $k = \lfloor \frac{n}{n-t} \rfloor + 1$. However, to achieve optimal agreement, we introduce an algorithm that spans $t+1$ rounds and guarantees $k$-set agreement for $k = \lfloor \frac{n}{n-t} \rfloor$ for any value of $t$. This algorithm leverages $n$ instances of the \textbf{Terminating Reliable Broadcast} (TRB), where the delivered value represents the proposed value for the set agreement. 

\subsection{A Two Rounds $k$-Set Agreement  In a Byzantine Failures Synchronous Model With Authentication}
We first present a two rounds algorithm, that ensures $k$-Set Agreement
for $k= \lfloor \frac{n}{n-t} \rfloor +1$, this is an authenticated
algorithm where the messages sent by processes are signed.  This
prevents any faulty process from forging the signature of a correct
process or misrepresenting the value sent by a correct process.

\subsection*{The Algorithm}

Algorithm \ref{2roundBSMP} ensures $k$-set agreement for $k= \lfloor \frac{n}{n-t} \rfloor + 1$ and $\forall t$. The processes exchanges their messages in two rounds. In the first round, a process $p_i$ sends its initial value $v_i$, and receives every other process' initial value if any, it stores them in a vector $V_i$, and in the second round it sends this vector to every other processes and receive from every process $p_j$ a vector $V_j$ if any.

% Algorithm 

\begin{algorithm2e}[]
	
	\SetAlgoLined
	\KwIn{$v_i$  initial value}
	\KwResult{decide}
	
	\Comment{Local variables}  \label{phase1}
	$V_i$ Vector of decision of size $n$ initialized to $\perp$\\
		$M_i[i][i] = V_i[i] \leftarrow    v_i  $ \label{broadcast2round}; \\
{\bf for all} $r,\ell \neq i$ {\bf do} $M_i[j][r]\leftarrow \bot${\bf end for}; \\
	
	\BlankLine 
\Comment{----------------------- round 1  ----------------------------------------}

	\BlankLine
	Send the value $M_i[i][i]$ to all the processes ; \\
	\textbf{When} $p_i$ receives $v_j$ from $p_j$ \textbf{do} $M_i[i][j] \leftarrow v_j$; \\ 

	\BlankLine
	\Comment{----------------------- round 2  ----------------------------------------}
	
	 Send  the vector  $V_i = M_i[i][*]$  to all the processes;\\
	
	\textbf{when } $p_i$ receives a vector $V$  from $p_j$ 
	\textbf{do}  $M_i[j]\leftarrow V$;         \\
	
		\Comment{-------------------  at the end of round 2  -----------------------------------}
\For{$j=1$ \KwTo $n$, with $j\neq i $ }
{   
	$w = M[i][j]$;\\
	
	\uIf{ $w =\perp$}{ $V_i[j]= \perp$;}
	\Else{
		$V_i[j]= w $;\\
		\For{$\ell =1$ \KwTo $n$, with $\ell \neq i $}
	{ 	\uIf{ $M_i[\ell][j] \neq \perp $  }{  
			\If{$M_i[\ell][j] \neq w $}{$V_i[j]= \perp $ \label{botComp}}
		
		} 
	 }
	
	}

}

		\Comment{------------ Decision at the end of round 2  ---------------------}
	  \If{$p_i$ finds in $V_i$, $n-t$ values equal to $v_i$ }{  
		decide \textit{$v_i$;} \label{Dinit2round}
	} 
	\Else{
		decide $\perp$ \; \label{bottom2round}
	}

	\caption{Solving $k$-set Agreement  in
          % $\mathcal{BSMP}_{n,t}[\forall t, Aut]$}
          synchronous message passing with Byzantine failures and
          authentication.
	\label{2roundBSMP}}
\end{algorithm2e}

\subsection*{Proof of the algorithm}
First, note that the exchanged messages are authenticated, where a Byzantine process might choose to not  relay a message, but it is constrained from misrepresenting or altering the content of a received message.  And only the valid signed messages are considered.

Each process $p_i$ manages a local matrix  $M_i[1..n][1..n]$,  such that
$M_i[i][i]$ is initialized to the value proposed by $p_i$, where  $M_i[i][i]$ is a signed message, and
all the other entries are initialized to $\bot$. 
In the first round each $p_i$ sends $M_i[i][i]$
to all the processes and assigns to  $M_i[i][j]$
the value it receives from $p_j$.

Then at the second round, each process $p_i$ broadcasts its
$M_i[i][\ast]$ vector, and the received vectors from other processes
are used to update the matrix $M_i$ as follows: $p_i$ receives from
$p_j$ the vector $M$ that it stores it in $M_i[j] [\ast]$.

An example of  $p_i$'s matrix $M_i$ is represented in  \cref{matrix mi}.  At the conclusion of this round, process $p_i$ compares the values of each column. If it detects that a process has sent two different values to two different processes, it sets its own values in $V_i$ to $\perp$. To guarantee the Validity property, if $p_i$ finds $n-t$ values in $V_i$ that are equal to its own initial value $v_i$, it decides on $v_i$ as the agreed value. Otherwise, it decides on $\perp$.
\begin{figure}[htb!]
	\centering
	\[M_i = 
	\begin{tikzpicture}[baseline=(math-axis),every right delimiter/.style={xshift=-3pt},every left delimiter/.style={xshift=3pt}, arrow/.style = {thick, -stealth}]%
		\matrix [matrix of math nodes,left delimiter=(,right delimiter=)] (matrix)
		{
			M_{11} && \cdots && M_{1i} && \cdots && M_{1j} && \cdots &&  M_{1n} \\
			\vdots &&        && \vdots &&        &&\vdots  \\
			|(Mi1)| M_{i1} && |(Mi2)| \cdots && |(Mi3)| M_{ii} = v_i && |(Mi4)| \cdots && |(Mi5)| M_{ij} && |(Mi6)| \cdots && |(Mi7)|  M_{in} \\
			\vdots &&        && \vdots &&        && \vdots  \\
			|(Mj1)| M_{j1} && |(Mj2)| \cdots && |(Mj3)| M_{ji} && |(Mj4)| \cdots && |(Mj5)| M_{jj} = v_j && |(Mj6)| \cdots && |(Mj7)|  M_{jn} \\
			\vdots &&        && \vdots &&        && \vdots  \\
			M_{n1} && \cdots && M_{ni} && \cdots && M_{nj} && \cdots &&  M_{nn} \\
		} ;
		\node (i)[color = red, draw,dashed,inner sep=0pt,fit=(Mi1) (Mi7)] {};
		\node (s1) at (4,2) {$M_i[i] [\ast]$};
		\draw [arrow, red] (s1) to (i);
		
		\node (j)[color = blue, draw,dashed,inner sep=0pt,fit=(Mj1) (Mj7)] {};
		\node (s2) at (4,-2) {$M_i[j] [\ast]$};
		\draw [arrow, blue] (s2) to (j);
		
		\coordinate (math-axis) at ($(matrix.center)+(0em,-0.25em)$);
		
	\end{tikzpicture}\]
	\caption{Matrix $M_i$}
	\label{matrix mi}
\end{figure}

\begin{lemma}
	Let $p_i$ be a correct process with matrix $M_i$, and let  $p_\ell$ and $p_j$ be two processes. If  $M_i[i][j] \neq M_i[\ell][j]$, which means that if process $p_j$ sent two distinct values, different than $\perp$ to $p_i$ and $p_\ell$, then $p_j$ is Byzantine.
\end{lemma}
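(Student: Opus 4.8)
The plan is to reduce the statement entirely to the unforgeability of signatures, since the only way two differing yet acceptable values attributed to $p_j$ can coexist is if $p_j$ itself produced both. I would argue by tracing the provenance of each of the two entries $M_i[i][j]$ and $M_i[\ell][j]$ and then invoking the signature guarantee on each.

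First I would establish where $M_i[i][j]$ comes from. This entry is set during round~1, when $p_i$ receives $v_j$ directly from $p_j$. Because only validly signed messages are considered and $M_i[i][j] \neq \perp$, this value necessarily carries $p_j$'s signature; hence $p_j$ signed the value $M_i[i][j]$. Next I would establish the provenance of $M_i[\ell][j]$: it is the $j$-th component of the vector $V_\ell = M_\ell[\ell][\ast]$ that $p_i$ receives from $p_\ell$ in round~2 and stores as row $\ell$ of its matrix, so it is the value $p_\ell$ claims to have received from $p_j$ in round~1. Here is the crucial point: even if $p_\ell$ is Byzantine, it cannot fabricate this value, because $p_i$ retains a non-$\perp$ entry in column $j$ only when it is accompanied by a valid signature of $p_j$, and signatures of processes cannot be forged. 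Thus a non-$\perp$ value $M_i[\ell][j]$ forces the conclusion that $p_j$ actually signed that value too.

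Finally I would combine the two observations: $p_j$ signed both $M_i[i][j]$ and $M_i[\ell][j]$, and by hypothesis these two values are distinct and non-$\perp$. A correct process signs only its single initial value and never signs a second, different one. Therefore $p_j$ cannot be correct, that is, $p_j$ is Byzantine, which is exactly the claim.

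The hard part will be the middle step: pinning down precisely why a Byzantine $p_\ell$ cannot contaminate column $j$ with an arbitrary value. This hinges on the modelling assumption stated informally before the algorithm---that a faulty process may withhold or fail to relay a message but can neither alter its content nor forge a signature---together with the convention that $p_i$ discards any vector component that is not a properly signed message of its purported originator $p_j$. Making this filtering explicit is what converts the informal authentication assumption into the rigorous signature argument the lemma rests on; once that is in place, the rest of the proof is immediate.
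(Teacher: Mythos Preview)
Your argument is correct and rests on the same idea as the paper's proof: the unforgeability of $p_j$'s signature forces any non-$\perp$ entry in column $j$ of $M_i$ to be a value that $p_j$ actually signed, so two distinct such entries expose $p_j$ as Byzantine. The paper phrases this as a short contradiction (assume $p_j$ correct, then everyone---including a Byzantine relayer $p_\ell$---can only forward the single signed value $p_j$ sent, so the column is constant), whereas you argue directly by tracing provenance; the content is the same, and your version is in fact more explicit about why a Byzantine $p_\ell$ cannot contaminate the column.
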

 
 \begin{proof} \label{byzproc}
 	By contradiction, lets suppose that $p_j$ is correct and send the same value to everyone, thanks to the authentication, $p_j$'s message cannot be forged or lied about, then all the processes in $\Pi$ will have the same view on $p_j$' value, and all the processes correct or Byzantine will relay the same message sent by $p_j$.
 	
 \end{proof}

\begin{lemma}
	[Validity] If all the correct processes propose the same value, they decide that value.
\end{lemma}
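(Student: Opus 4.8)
The goal is to show that if every correct process proposes the same value $v$, then every correct process decides $v$. My plan is to verify that each correct process $p_i$ (for which $v_i = v$) finds at least $n-t$ entries equal to $v$ in its vector $V_i$, which by line~\ref{Dinit2round} forces the decision $v$, while simultaneously ruling out the $\perp$ branch.

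First I would track what a correct process $p_i$ holds in its matrix $M_i$ at the end of round~2. There are at least $n-t$ correct processes, and all of them propose $v$. In round~1 each correct process $p_j$ broadcasts its signed value $v$ to all, so for every correct $p_j$ we have $M_i[i][j] = v$. The key point is that a Byzantine process cannot forge a correct process's signature, so the value attributed to any correct $p_j$ in the column $j$ is genuinely $v$ across every row that contains an authenticated message from $p_j$. Hence for each correct $p_j$, all non-$\perp$ entries $M_i[\ell][j]$ that $p_i$ collects equal $v$.

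Next I would examine the column-consistency check in lines~\ref{broadcast2round}--\ref{botComp}. For a correct sender $p_j$, the check that resets $V_i[j]$ to $\perp$ (line~\ref{botComp}) triggers only when two rows disagree on column $j$; but by the previous paragraph every authenticated entry in a correct process's column is $v$, so no disagreement can arise and $V_i[j]$ retains the value $v$. Therefore, for each of the at least $n-t$ correct processes $p_j$, the entry $V_i[j]$ equals $v$. This gives $p_i$ at least $n-t$ values equal to its own initial value $v_i = v$, so the condition on line~\ref{Dinit2round} is satisfied and $p_i$ decides $v$.

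The main obstacle I anticipate is handling the authentication argument cleanly: I must argue that a Byzantine process relaying column $j$ cannot inject a spurious value $v' \neq v$ into a correct process's column without forging $p_j$'s signature, which the unforgeability assumption forbids, so such a forged entry is simply ignored as invalid. The only way a Byzantine relay can affect $V_i[j]$ is by omitting a message, which reduces entries to $\perp$ but never creates a genuine conflict for a correct $p_j$. Once this is pinned down, the counting argument is routine, and I would invoke the preceding lemma (\ref{byzproc}) to confirm that any genuine disagreement in a column can only come from a Byzantine sender, never from one of the correct processes that all proposed $v$.
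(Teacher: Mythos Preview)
Your proposal is correct and follows essentially the same approach as the paper's own proof: count the at least $n-t$ correct processes, use unforgeability of signatures to argue that every non-$\perp$ entry in a correct process's column must equal $v$, conclude that the column-consistency check at line~\ref{botComp} never fires for such columns so $V_i[j]=v$ for each correct $p_j$, and then invoke line~\ref{Dinit2round}. Your write-up is somewhat more explicit about the authentication step (distinguishing forgery from omission) than the paper's, but the structure and key ideas are the same.
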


\begin{proof}
 	Consider a set $C$ consisting of correct processes that propose their initial value $v$. Since we have at most $t$ faulty, $C$ is consisting at least of $n-t$ correct processes. Since the messages are authenticated, no Byzantine can lie about a correct process value, thus all the column corresponding to correct process have the same value or bottom in case a Byzantine lies about receiving nothing. For every process $p_j$ in $C$, $V[j] = v$. Thus, by the end of the second round, all correct processes will have at least $n-t$ values in their vector that are equal to $v$. Then, in  \cref{Dinit2round} of the algorithm, when a correct process observes this condition, it will decide on the value $v$.
\end{proof}

\begin{lemma}
	[Agreement] at most $\lfloor \frac{n}{n-t}\rfloor +1$ values are decided.
\end{lemma}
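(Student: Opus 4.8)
The plan is to bound separately the number of distinct non-$\perp$ values decided by correct processes and then add one for the value $\perp$. Since, by lines \ref{Dinit2round} and \ref{bottom2round}, a correct process decides only its own initial value or $\perp$, I would let $u_1,\dots,u_m$ be the distinct non-$\perp$ values decided by correct processes, and for each $u_s$ fix a correct witness $p_{i_s}$ that decides it. By line \ref{Dinit2round} we then have $u_s=v_{i_s}$ and $p_{i_s}$ found at least $n-t$ entries equal to $u_s$ in its vector $V_{i_s}$. I would define the \emph{support set} $S_s=\{\,j : V_{i_s}[j]=u_s\,\}$, so that $|S_s|\ge n-t$ and $i_s\in S_s$. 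It then suffices to prove $m\le\lfloor n/(n-t)\rfloor$, since counting the possible value $\perp$ adds at most one more.

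The heart of the argument is that the sets $S_1,\dots,S_m$ are pairwise \emph{disjoint}, and I would prove this by contradiction. Suppose $j\in S_s\cap S_{s'}$ for $s\ne s'$; then $u_s\ne u_{s'}$, and since each process decides a single value the witnesses satisfy $i_s\ne i_{s'}$. Because $V_{i_s}[j]=u_s\ne\perp$, the \textbf{else} branch assigned $V_{i_s}[j]=w$ with $w=M_{i_s}[i_s][j]$, so $p_{i_s}$ received the signed value $u_s$ from $p_j$ in round 1; symmetrically $p_{i_{s'}}$ received the signed value $u_{s'}$ from $p_j$. Now I split on the status of $p_j$. If $p_j$ is correct, authentication forbids it from exhibiting two distinct signed values (this is exactly the content of the first lemma of this subsection), forcing $u_s=u_{s'}$, a contradiction. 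If $p_j$ is Byzantine, then the correct witness $p_{i_{s'}}$ forwards in round 2 its row $M_{i_{s'}}[i_{s'}][\ast]$, whose $j$-th entry is the validly signed value $u_{s'}$; hence $p_{i_s}$ stores $M_{i_s}[i_{s'}][j]=u_{s'}\ne\perp$. Since $u_{s'}\ne u_s=M_{i_s}[i_s][j]$, the inner test at line \ref{botComp} fires and forces $V_{i_s}[j]=\perp$, contradicting $j\in S_s$.

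This Byzantine case is the step I expect to be the main obstacle: one must verify that the round-2 cross-forwarding, combined with authentication and the column-consistency test, genuinely prevents a single Byzantine process from being ``credited'' toward two different decided values — the subtlety being that $p_j$ cannot both hand a signed $u_s$ to $p_{i_s}$ and a signed $u_{s'}$ to $p_{i_{s'}}$ without one honest witness relaying the conflicting signature to the other. Once disjointness is in hand, the counting is routine: the $S_s$ are disjoint subsets of the $n$ processes, each of size at least $n-t$, so $m(n-t)\le\sum_{s=1}^m|S_s|\le n$, whence $m\le n/(n-t)$ and therefore $m\le\lfloor n/(n-t)\rfloor$ because $m$ is an integer. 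Adding the possible value $\perp$ shows that correct processes decide at most $\lfloor n/(n-t)\rfloor+1$ distinct values, which is the claimed bound.
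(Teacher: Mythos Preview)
Your proof is correct and follows essentially the same approach as the paper: bound the number of non-$\perp$ decided values by $\lfloor n/(n-t)\rfloor$ via a disjoint-supports argument (a Byzantine process cannot be credited toward two distinct decided values because the round-2 cross-forwarding plus authentication triggers the column-consistency check at line~\ref{botComp}), then add one for $\perp$. Your presentation is in fact more formal than the paper's --- you make the support sets $S_s$ and their disjointness explicit and handle the correct/Byzantine case split cleanly --- whereas the paper argues the same point somewhat more informally.
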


\begin{proof}
We first prove that no more than $\lfloor \frac{n}{n-t}\rfloor$  different values  from $\perp$ are decided.

First, note that a Byzantine process may lie about another Byzantine process' value, in this case we have two different values in the same column, then a correct process analyzing its matrix will set that value to $\perp$.

Let $p_\ell$ be a Byzantine process that sends two distinct values, to two correct processes  in the first round, since the line of the correct processes are equal in all the matrices, all the correct processes will detect that $p_\ell$ is Byzantine, and set its value to $\perp$ in \cref{botComp}.

Now we suppose that $p_\ell$ does not send two distinct values to two correct processes in the first round, but send $v_\ell$ to one and nothing to the other. In this case, at the end of the second round, let us suppose two correct processes $p_i$ and $p_j$ with respectively $V_i$ and $V_j$ as decision vector. Either the two processes have $V_i[\ell] = V_j[\ell] $ or one of them has $\perp$ and the other $v_\ell$.

From all the above, the wort case scenario is when $p_\ell$ a Byzantine process is acting like a correct, in this case, when a correct process $p_i$ is deciding its initial value, the occurrence of $v_i$ in $V_i$ has to be $n-t$. Thus, we can have up to $\lfloor \frac{n}{n-t}\rfloor$  groups of size $n-t$ with different values, since a Byzantine process cannot belong to a set of processes proposing $v$ and another set of processes proposing $v'$.  

In conclusion, we have at most $\lfloor \frac{n}{n-t}\rfloor$  decided values plus the $\perp$.

\end{proof}

\begin{lemma}	
	[Termination] All the correct processes decide.
\end{lemma}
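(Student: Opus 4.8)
The plan is to observe that Termination for \cref{2roundBSMP} is immediate from synchrony together with the fact that the algorithm runs for a fixed, bounded number of rounds and ends in a decision that is forced by a total conditional. So I would argue that there is essentially nothing that can block a correct process from deciding.

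First I would recall that we are in the synchronous message passing model, where processes execute in lock-step rounds: messages sent in a round are received in that same round, and any message sent by a correct process is delivered. Consequently, every correct process $p_i$ completes round~1 (sending $M_i[i][i]$ and recording whatever it receives into $M_i[i][\ast]$) and then completes round~2 (sending $V_i$ and storing the received vectors into its matrix) within the prescribed two rounds. Crucially, the absence of a message from a faulty process does not stall a correct process: a missing value is simply kept as $\perp$, so $p_i$ is never blocked waiting on a Byzantine process.

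Next I would note that at the end of round~2 each correct process executes the deterministic decision code, which is a complete case analysis: the test on \cref{Dinit2round} checks whether $V_i$ contains $n-t$ entries equal to $v_i$; if so, $p_i$ decides $v_i$, and otherwise it falls into the \textbf{else} branch on \cref{bottom2round} and decides $\perp$. Since this \emph{if/else} covers both possibilities, every correct process reaches exactly one decision statement and irrevocably decides after two rounds. This establishes Termination.

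I do not expect any real obstacle here, since Termination in a synchronous, fixed-round protocol is structural rather than quantitative; the only point worth stating explicitly is that Byzantine behavior (withholding or corrupting messages) cannot prevent a correct process from finishing its two rounds, because unreceived values default to $\perp$ and corrupted or unsigned messages are ignored. Hence the conclusion follows without any case distinction on the behavior of the faulty processes.
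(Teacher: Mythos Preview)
Your proof is correct and follows the same approach as the paper, which simply observes that all correct processes execute the two rounds and decide at the end of the second one. Your version is merely a more detailed expansion of that one-line argument, spelling out why synchrony and the total \textbf{if/else} guarantee a decision.
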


\begin{proof}
	All the correct processes will execute the two rounds, and decide at the end of the second round.
\end{proof}

By the above lemmas, we get the following theorem.

\begin{theorem} \label{th:2bsmp}
	Algorithm \ref{2roundBSMP}, ensures $k$-set agreement in an authenticated message passing system with Byzantine failures for $k= \lfloor \frac{n}{n-t}\rfloor +1$
      \end{theorem}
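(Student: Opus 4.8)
The plan is to assemble the theorem directly from the three defining properties of $k$-set agreement established in the preceding lemmas. Recall from the definition in Section~\ref{sec:prelims} that an algorithm solves $k$-set agreement precisely when it guarantees Validity, Agreement (at most $k$ distinct values decided among correct processes), and Termination. Each of these has already been isolated as a separate lemma, so the body of the proof is a short synthesis rather than a fresh argument, with the real content concentrated in identifying the correct value of $k$.

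First I would dispatch the two easy properties. Termination follows from the Termination lemma: every correct process executes both rounds and reaches the decision step at the end of round~2, so it always decides. Validity follows from the Validity lemma: when all correct processes propose the same $v$, authentication prevents any Byzantine process from corrupting the column of a correct process, so each of the at least $n-t$ correct columns carries $v$ (or $\perp$). This gives every correct process at least $n-t$ copies of $v$ in $V_i$, forcing the decision on line~\ref{Dinit2round} to be $v$.

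The substantive step is the Agreement bound, and this is where I would place the emphasis. Fixing $k = \lfloor \frac{n}{n-t} \rfloor + 1$, I would argue that the set of values decided by correct processes has size at most $k$. The auxiliary lemma (a process sending two distinct non-$\perp$ values to two correct processes is detected by cross-checking the columns of $M_i$, hence is Byzantine) guarantees that any such equivocating process contributes only $\perp$, since correct processes set that column to $\perp$ on line~\ref{botComp}. Consequently every decided value other than $\perp$ must be backed by a block of at least $n-t$ processes that all support it, and because no single process can legitimately support two distinct values, these blocks are pairwise disjoint. There are therefore at most $\lfloor n/(n-t) \rfloor$ of them, and adding the single extra possibility that $\perp$ itself is decided yields at most $\lfloor n/(n-t) \rfloor + 1 = k$ distinct decided values.

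The main obstacle, and the only place where genuine care is needed, is the counting in this Agreement step: one must confirm that a Byzantine process behaving asymmetrically — sending $v_\ell$ to some correct processes and nothing to others — cannot inflate the count beyond the disjoint-block bound. The key observation, already exploited in the Agreement lemma, is that such behavior causes a correct process to record either $v_\ell$ or $\perp$ in that column, but never a second genuine value, so it can at worst collapse one candidate block rather than create a new distinct decided value. Once that caveat is verified, the three lemmas combine to give exactly $k = \lfloor \frac{n}{n-t} \rfloor + 1$, completing the proof.
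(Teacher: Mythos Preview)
Your proposal is correct and mirrors the paper exactly: the paper's own proof of the theorem is the one-line statement ``By the above lemmas, we get the following theorem,'' invoking precisely the Validity, Agreement, and Termination lemmas you cite. Your additional recapitulation of the Agreement counting (disjoint blocks of size $n-t$ plus the possible extra $\perp$) and the handling of the asymmetric-Byzantine case are accurate and match the content of the paper's Agreement lemma, so there is nothing to add or correct.
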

      
      % \paragraph{Remarks} \begin{itemize}
      % \item
      %   If $k=\lfloor \frac{n}{n-t}\rfloor$, $k$ is the best value for
      %   which $k$-
      % \end{itemize}

%% -modif hf

\subsection{An Algorithm For  $k$-Set Agreement In Byzantine Failures
  Synchronous Message Passing Models: Optimal Concerning The Value $k$}

In this section, we present an algorithm for $k$-set Agreement when $k
\ge  \lfloor \frac{n}{n-t} \rfloor$. From Lemma~\ref{lowerbound} it is
the best we can do and then this algorithm is optimal concerning $k$.

Before that, we give a Terminating Reliable Broadcast~\cite{ SrikanthT87} that is in the
heart of the following $k$-set Agreement algorithm.

For this, we use TRB to solve an interactive
consistency~\cite{PSL80}, namely correct processes will agree on a
$n$-vector  corresponding to the initial values of processes.
More precisely, each process $p_i$ invokes first a $TRB-bcast(v,p_i)$
with its initial value $v$. Then each process $p_i$ will fill a local
$n$ vector $L_i $ with the values obtained by each $TRB-deliver(p_j)$.
It is the Phase 1 of algorithm~\ref{BSMP}.

Specifically, we introduce an authenticated algorithm that ensures
$k$-set agreement for $k = \lfloor \frac{n}{n-t} \rfloor$. To achieve
this level of agreement, we are constrained to use more than two
rounds. The algorithm incorporates a primitive that operates over
$t+1$ rounds. Once again, there are no limitations on the number of
failures $t$. This algorithm uses $n$ instances of the
\textbf{Terminating Reliable Broadcast } (TRB), where the delivered
value is the proposed value for the set agreement.  

For our algorithm, we implement the authenticated TRB primitive
for $n$ instances, where the $ith$ instance of TRB corresponds to the
run instance where process $p_i$ is the sender in TRB. 
%Definitions

\subsection*{The Algorithm}

Algorithm \ref{BSMP}, ensures $k$-set agreement for $k \ge  \lfloor \frac{n}{n-t} \rfloor$ and $\forall t$.
when $p_i$-$TRB-bcast(v_i)$  is called then  process $p_i$ is the sender in the TRB algorithm, it stores its value in the vector $L_i$, while on the call of $q$-$deliver(m)$, a process $q$ is the sender in the TRB algorithm that sent a value $m$, and $q_i$ will deliver that value in the TRB algorithm.

% Algorithm 

\begin{algorithm2e}[!htb]
	\SetAlgoLined

		%{\bf Input}  $value$ $v_i$\\
		
		\Init{$L_i$ vector initialized to $perp$}

		\Comment{------------------ Phase 1  ----------------------------------------}

		\underline{ \emph{TRB-bcast($v_i$,$p_i$):}}\\
		
		\label{broadcast4}  $L_i[p_i] \gets v_i$ ;\\
		
		\ForEach{$q \in \Pi$}{
			\underline{ \emph{TRB-deliver($q$)}:}  \\   
			
			\label{deliver4} $L_i[q] \gets m$; \\
			
		}

		\Comment{------------------ Phase 2 --------------------------}
		
		\uIf{$p_i$ finds in $L_i$, $n-t$ values identical to $v_i$}{
			decide $v_i$; \label{decide4} \;
		}
		\uElseIf{$p_i$ finds a value $v$ repeated $n-t$}{
			decide $v$;
		}
		\Else{
			decide $\perp$; 
		}

	\caption{Solving $k$-SA in Byzantine Synchronous Message passing model.\\ Code for $p_i$.}
	\label{BSMP}
	
\end{algorithm2e}

%Proof of the Algorithm
\subsection*{Proof of the algorithm}
Every process $p_i$ holds a vector $L_i$ initialized to $\perp$.  $n$ instances of TRB are started, one per process, with each process $p_i$ being the sender in one instance $i$.

Every process $p_i$ records in vector $L_i$ the message $m$ delivered from process $q$. We consider $L_i$ the vector of the proposed values for the $k$-set Agreement. 

At the end of Phase 1, after each process sets its vector $L_i$, even if we have in the system Byzantine processes, all the vectors of correct processes will be equal, and if all the correct processes propose the same value, we  will have at least $n-t$ values equal to their initial value. \\
This ensures the Validity of $k$-set agreement,  at the end of Phase 2, they will decide that value in  \cref{decide4}. 

Let us suppose we are in an execution where the correct processes do not have the same initial value, since we have at least $n-t$ correct processes, for every process $p_i$ we have at least $n-t$ values different from $SF$ in $L_i$.\\

At the end of the $n$ instances of TRB, all the correct processes have the same  $L$. Thus, in phase 2, each correct process $p_i$, decides from $L_i$, by  either finding  $n-t$ values equal to its initial value, or any value repeated $n-t$ times, if not  decides $\perp$.
We will see in the following lemma prove that if a process decides $\perp$, then no other correct process decides a value different from $\perp$.
%Les lemmas 

\begin{lemma} 
	Let $p_i$ and $p_j$ be two correct processes that run  \cref{BSMP}. $L_i = L_j$,  at the end of Phase 1. 
	\label{equalL}
\end{lemma}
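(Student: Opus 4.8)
The statement claims that two correct processes $p_i$ and $p_j$ end Phase 1 with identical vectors $L_i = L_j$. Since each entry $L_i[q]$ is populated by the value delivered in the $q$-th TRB instance, the plan is to reduce the claim to the correctness of TRB applied coordinate-by-coordinate. I would argue that the vectors agree entry by entry, and then conclude they are equal as vectors.

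First I would fix an arbitrary index $q \in \Pi$ and examine the single TRB instance in which $p_q$ is the sender. Both $p_i$ and $p_j$ are correct and, by the Termination property of TRB, each delivers some value in this instance; these delivered values are exactly $L_i[q]$ and $L_j[q]$ respectively. The crucial tool is the Agreement property of TRB (already established in the excerpt via the claim that all correct processes extract the same set of values, hence deliver the same message): if any correct process delivers a message $m$, then all correct processes deliver $m$. Applying this to $p_i$ and $p_j$ in instance $q$, I get $L_i[q] = L_j[q]$, regardless of whether the sender $p_q$ is correct, crashed, or Byzantine — this is precisely the strength of the authenticated TRB, which tolerates any number of faults and yields a common ``sender faulty'' value $SF$ when the sender misbehaves.

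Since $q$ was arbitrary, the equality $L_i[q] = L_j[q]$ holds for every coordinate $q \in \{1,\dots,n\}$. I would also note the one special coordinate $q = i$ (and symmetrically $q = j$): here $p_i$ sets $L_i[p_i] \gets v_i$ directly as the sender of its own TRB-bcast, but by the Validity property of TRB a correct sender's broadcast value is delivered unchanged by every correct process, so $p_j$ delivers $v_i$ in that instance too, keeping the coordinates consistent. Combining the agreement over all $n$ coordinates gives $L_i = L_j$, completing the proof.

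I do not expect a genuine obstacle here, since the lemma is essentially a direct corollary of the per-instance Agreement (and Validity) of the TRB primitive; the only point requiring care is to phrase the argument coordinate-wise and to invoke the already-proven TRB Agreement uniformly across all sender types rather than re-deriving anything about signatures or rounds.
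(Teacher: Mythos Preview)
Your proposal is correct and follows essentially the same approach as the paper: both reduce the lemma to the Agreement property of TRB applied to each of the $n$ instances. The paper's proof is a one-liner invoking TRB Agreement, whereas you spell out the coordinate-wise argument and additionally handle the sender's own coordinate via TRB Validity; this extra care is sound but not strictly required, since in the algorithm each process also performs \emph{TRB-deliver} on its own instance and, being a correct sender, delivers its own value by Validity anyway.
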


\begin{proof}
	From the  agreement of the TRB [\cref{agr-trb}], we have if a correct process delivers a value $m$, that value is delivered by all the correct processes, the delivered message is stored in the vectors of correct processes, then all the correct processes will have the same value in their vector.
	
\end{proof}

\begin{lemma}
	The vector $L$ contains the initial value broadcast by correct processes.
	\label{intivL}
\end{lemma}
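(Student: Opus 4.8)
The claim to prove is \cref{intivL}: the vector $L$ contains the initial value broadcast by correct processes. The plan is to fix an arbitrary correct process $p_k$ and track its proposed value $v_k$ through the $k$-th TRB instance, where $p_k$ is the sender, and then invoke the TRB guarantees established earlier in the excerpt to conclude that every correct process $p_i$ records exactly $v_k$ at position $L_i[p_k]$.

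First I would recall that the algorithm starts $n$ instances of TRB, one per process, and that a correct sender $p_k$ invokes $TRB\text{-}bcast(v_k,p_k)$ with its true initial value $v_k$. Since $p_k$ is correct, the \textbf{Validity} property of TRB [\cref{val-trb}] applies: if the sender is correct and broadcasts a message $m$, then every correct process delivers $m$. Hence in the $k$-th instance every correct process $p_i$ executes $TRB\text{-}deliver(p_k)$ and obtains precisely $m=v_k$, which it stores via \cref{deliver4} as $L_i[p_k] \gets v_k$. The sender itself also places $v_k$ in $L_k[p_k]$ at \cref{broadcast4}. Therefore, for every pair of correct processes, the entry indexed by a correct sender holds that sender's genuine initial value, which establishes exactly what the lemma asserts.

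For completeness I would also note why a Byzantine sender cannot corrupt a correct process's own entry: by the \textbf{Integrity} property [\cref{int-trb}], any delivered message $m\neq SF$ must actually have been broadcast by the sender, so no faulty process can cause a correct process's proposed value to be overwritten or misrepresented in its vector. Combined with \cref{equalL}, which already guarantees $L_i=L_j$ for all correct $p_i,p_j$, this pins down the common vector $L$: at each coordinate corresponding to a correct process it contains that process's true initial value.

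The argument is essentially a direct translation of the TRB Validity and Integrity properties into statements about the entries of $L$, so there is no genuine obstacle here; the only thing requiring care is to be explicit that the $SF$ placeholder delivered for a faulty sender does not interfere with the coordinates belonging to correct senders, and that \cref{val-trb} is what forces agreement on the correct-sender coordinates rather than merely consistency across processes. The main subtlety, then, is bookkeeping over which index of $L$ is controlled by which TRB instance, ensuring that the validity of the $k$-th instance is what governs $L_i[p_k]$.
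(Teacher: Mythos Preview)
Your proof is correct, but it follows a different line than the paper's. The paper's argument is essentially two sentences: it observes that each correct $p_i$ writes its own initial value into $L_i[p_i]$ at \cref{broadcast4}, and then invokes \cref{equalL} to conclude that, since all correct processes share the same vector $L$, that initial value appears in every correct process's copy. In other words, the paper leans on the self-assignment plus the agreement lemma rather than on TRB Validity.

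Your route is to appeal directly to the TRB \textbf{Validity} property for the $k$-th instance, so that every correct $p_i$ records $v_k$ at $L_i[p_k]$ via \cref{deliver4}; \cref{equalL} is then only a cross-check. This is arguably cleaner, because it does not depend on the assignment at \cref{broadcast4} surviving the subsequent loop (the algorithm iterates over all $q\in\Pi$, so $L_i[p_i]$ is overwritten at \cref{deliver4}; the paper's shortcut is only safe because TRB Validity makes that overwrite harmless). The trade-off is that the paper's version is shorter and reuses the already-proved \cref{equalL}, while yours is self-contained and makes the dependence on the TRB specification explicit.
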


\begin{proof}
	Let $L$ be the vector $L_i$ of a correct process $p_i$, from \cref{equalL}, all the correct processes have the same $L$. 
	Since in Phase 1, every correct process  $p_i$ stores its initial value in $L_i$ in \cref{broadcast4}, then every initial value broadcast by a correct process $p_i$ is in $L$.
	
\end{proof}
Hence, from the above Lemmas  we conclude that at the end of phase 1, all the correct processes have the same view on $L$.
\begin{lemma}
	If a correct process $p_i$ decides a value $v$ different from $\perp$, no other correct process will decide $\perp$. 
	\label{DecBottom}
\end{lemma}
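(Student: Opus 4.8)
The plan is to reduce everything to the fact, already established in \cref{equalL}, that all correct processes hold the same vector at the end of Phase~1. Write $L$ for this common vector, so that $L_i = L_j = L$ for every pair of correct processes $p_i, p_j$. The whole lemma will then follow by inspecting the Phase~2 branching logic of an arbitrary correct $p_j$ against this shared $L$.

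First I would extract, from the hypothesis that $p_i$ decides some $v \neq \perp$, the single structural fact that I actually need: \emph{there exists a value $v$ such that $L$ contains at least $n-t$ entries equal to $v$}. This holds in either branch by which $p_i$ may have decided a non-$\perp$ value. If $p_i$ decided through the first clause (\cref{decide4}), then $v = v_i$ and $L_i = L$ has at least $n-t$ entries equal to $v_i$. If $p_i$ decided through the second (\texttt{else-if}) clause, then by definition $L$ contains some value $v$ repeated at least $n-t$ times. In both cases the conclusion is the same, and this is the only consequence of $p_i$'s decision I will carry forward; I deliberately collapse the two branches here so as not to track $p_i$'s exact decision rule.

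Next I would apply \cref{equalL} to transfer this property to $p_j$: since $L_j = L$, the vector $L_j$ also contains at least $n-t$ entries equal to the same $v$. I would then trace $p_j$ through its Phase~2 code. The final \texttt{else} branch, which outputs $\perp$, is reached only when \emph{both} guards fail: $p_j$ has no value equal to $v_j$ repeated $n-t$ times, \emph{and} $p_j$ finds no value at all repeated $n-t$ times. But the second guard is satisfied outright by $v$, independently of what $v_j$ is, because $L_j$ has $\geq n-t$ copies of $v$. Hence $p_j$ cannot fall through to the $\perp$ branch and must decide a non-$\perp$ value (namely $v_j$ via the first clause, or $v$ via the second). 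This proves the claim.

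I do not anticipate a genuine obstacle, since the argument is essentially a syntactic walk through the decision code once \cref{equalL} is invoked. The only point requiring a little care is the case split on how $p_i$ reached its decision: I would make sure to phrase the intermediate fact in a value-agnostic way (\emph{some} value has $\geq n-t$ occurrences in $L$) so that it uniformly triggers $p_j$'s second guard regardless of whether $p_i$ used the ``own-value'' clause or the ``repeated-value'' clause, and regardless of the relationship between $v$ and $v_j$. Note that this lemma only asserts that $p_j$ avoids $\perp$; it does not claim $p_j$ decides the same value as $p_i$, so no agreement-counting is needed at this stage.
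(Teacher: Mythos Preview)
Your proposal is correct and follows essentially the same approach as the paper: invoke \cref{equalL} to conclude $L_j = L_i$, infer that $L_j$ contains at least $n-t$ copies of $v$, and hence that $p_j$ cannot fall through to the $\perp$ branch. Your version is in fact more careful than the paper's, which glosses over the case split on how $p_i$ decided and loosely concludes ``every correct process will at least decide $v$'' rather than your more accurate observation that $p_j$ decides some non-$\perp$ value (possibly $v_j$ rather than $v$).
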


\begin{proof}
	Let $p_i$ be a correct process that decides on the value $v$ in Phase 2. This implies that $p_i$ has found at least $n-t$ instances of the value $v$ in $L_i$. Consider another correct process $p_j$. From \cref{equalL}, given that the vectors $L_i$ and $L_j$ are equal for all correct processes, $p_j$ will also find at least $ n-t $ instances of the value $v$. Thus, every correct process will at least decide $v$.
	
\end{proof}

\begin{lemma}
	{\em [Validity]} If all the correct processes propose the same value,
	they decide this value.
	\label{lemVal}
\end{lemma}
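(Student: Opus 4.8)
The plan is to combine the Validity property of the TRB primitive with a straightforward counting argument, exploiting the fact that at least $n-t$ of the senders are correct. First I would observe that since at most $t$ of the $n$ processes are faulty, there are at least $n-t$ correct processes, and by hypothesis each of them proposes the common value $v$; in particular $v_i = v$ for every correct $p_i$.

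Next I would appeal to TRB Validity [\cref{val-trb}]: for each correct process $p_j$, the $j$-th instance of TRB has a correct sender broadcasting $v$, so every correct process $p_i$ delivers $v$ in that instance and records $L_i[p_j] = v$ in \cref{deliver4} (and each correct $p_i$ sets $L_i[p_i] = v$ directly in \cref{broadcast4}). Because there are at least $n-t$ correct senders, this guarantees that every correct $p_i$ has at least $n-t$ entries of $L_i$ equal to $v$, one contributed per correct process. By \cref{equalL} and \cref{intivL} these vectors are moreover identical across correct processes and contain the initial values broadcast by the correct ones, although for Validity the per-process count already suffices.

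Finally I would close the argument by invoking the decision rule. Since $p_i$ is correct we have $v_i = v$, and $p_i$ finds at least $n-t$ entries of $L_i$ equal to $v_i$, so the first guard of Phase 2 is satisfied and $p_i$ executes \cref{decide4}, deciding $v_i = v$. The presence of Byzantine processes cannot defeat this: they may insert spurious values into $L_i$ or withhold their own, but they cannot erase the $n-t$ copies of $v$ that TRB Validity forces from the correct senders.

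The proof is essentially immediate once TRB Validity is available; the only point meriting care is to verify that the number of guaranteed copies of $v$ meets exactly the threshold $n-t$ used by the first branch of the decision rule, so that every correct process decides $v$ through \cref{decide4} rather than falling through to the later branches. This is where I expect the (mild) subtlety to lie, and it is handled precisely by counting the correct senders.
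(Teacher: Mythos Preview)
Your proposal is correct and follows essentially the same approach as the paper: both argue that the at least $n-t$ correct senders each contribute a copy of $v$ to every correct process's vector $L_i$, so the first branch at \cref{decide4} fires. The paper compresses this into a one-line appeal to \cref{equalL,intivL}, whereas you unpack it by invoking TRB Validity directly and doing the count explicitly; the substance is identical.
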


\begin{proof}
	Let $C$ be a set of correct processes that propose $v$, from  \cref{equalL,intivL} we have at least $n-t$ values equal to $v$ in $L$, then in  \cref{decide4}, all the correct processes will find that condition and decide $v$.
	
\end{proof}

\begin{lemma}
	{\em [Termination]} All the correct processes decide.
	\label{sim:term}
\end{lemma}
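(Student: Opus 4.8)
The plan is to derive Termination of \cref{BSMP} directly from the Termination property of the TRB primitive, plus the observation that the Phase~2 decision rule is an exhaustive local test. Since the algorithm is just $n$ TRB instances followed by a purely local decision, there is nothing asynchronous to worry about: once Phase~1 provably halts at a correct process, Phase~2 cannot block.

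First I would argue that Phase~1 completes at every correct process. By the Termination property of TRB, in each instance every correct process delivers some value (possibly $SF$); concretely, by \cref{TRBalgo} this delivery occurs at the end of round $t+1$ of that instance. Hence, for each $q \in \Pi$, the invocation $TRB\text{-}deliver(q)$ returns at every correct process, the \textbf{foreach} loop over $\Pi$ in Phase~1 terminates, and every entry $L_i[q]$ is assigned a value. Because there are exactly $n$ instances, each running over a bounded number of rounds, Phase~1 finishes at each correct process and leaves $L_i$ fully determined.

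Next I would observe that, once $L_i$ is fully filled, Phase~2 is a deterministic, strictly local computation: it checks whether $v_i$ occurs $n-t$ times in $L_i$, otherwise whether some value is repeated $n-t$ times, and otherwise decides $\perp$. These three cases are exhaustive and the last branch always produces a decision, so every correct process reaches a \emph{decide} statement. No further communication or waiting is involved, so the branch is guaranteed to be executed.

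The only point that needs a word of care — and it is not really an obstacle — is confirming that the termination of the \emph{composition} of the $n$ TRB instances follows from the termination of each individual instance. Since the instances are independent and each terminates in $t+1$ rounds at every correct process, their composition terminates as well; combined with the exhaustive Phase~2 test, this establishes that all correct processes decide, proving the lemma.
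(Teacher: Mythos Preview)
Your proof is correct and follows essentially the same approach as the paper: the paper's own proof is a one-liner (``All the correct processes will execute the two phases and will decide at the end of Phase~2''), and you simply unfold this by invoking TRB Termination for Phase~1 and the exhaustiveness of the decision rule for Phase~2. Nothing is missing; your version is just a more explicit rendering of the same argument.
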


\begin{proof}
	All the correct processes will execute the two phases and will decide at the end of Phase 2.
\end{proof}

\begin{lemma}
	{\em [Agreement]} At most $\lfloor \frac{n}{n-t} \rfloor $ values are decided by correct processes.
	\label{sim:agre}
\end{lemma}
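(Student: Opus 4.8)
The plan is to exploit the fact that, by \cref{equalL}, all correct processes share a single common vector $L$ of length $n$, every entry of which has been overwritten during Phase~1 by some TRB-delivered message (a genuine value or the marker $SF$), so that the Phase~2 decision of every correct process is governed by one and the same object. First I would observe that any value $v\neq\perp$ decided by a correct process must occur at least $n-t$ times in $L$: the first branch of Phase~2 fires only when $p_i$'s own value $v_i$ appears $n-t$ times in $L$, and the second branch fires only when some value $v$ is repeated $n-t$ times, so in either case the decided value lies among those whose multiplicity in $L$ is at least $n-t$. Since $L$ is common to all correct processes, the set $H$ of such high-multiplicity values is the same for everyone, and every non-$\perp$ decision of a correct process belongs to $H$.

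The heart of the argument is then a pigeonhole bound on $|H|$. If $m$ pairwise distinct values each occur at least $n-t$ times among the $n$ entries of $L$, their occurrences are disjoint, so $m(n-t)\le n$, whence $m\le\lfloor n/(n-t)\rfloor$. Thus $|H|\le\lfloor n/(n-t)\rfloor$, and at most $\lfloor n/(n-t)\rfloor$ distinct non-$\perp$ values can ever be decided by correct processes.

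It remains to rule out that $\perp$ contributes a spurious extra unit to this count, and this is exactly where the present algorithm improves on the two-round one. I would invoke \cref{DecBottom}: if some correct process decides a value $v\neq\perp$, then no correct process decides $\perp$. Consequently the set of values decided by correct processes is either the singleton $\{\perp\}$ or a subset of $H$. In the first case exactly one value is decided and $1\le\lfloor n/(n-t)\rfloor$ (since $t<n$ gives $n-t\ge 1$ and hence $n/(n-t)\ge 1$); in the second case the number of decisions is bounded by $|H|\le\lfloor n/(n-t)\rfloor$. Either way at most $\lfloor n/(n-t)\rfloor$ values are decided, as claimed.

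The main obstacle, and the only place the proof is genuinely delicate rather than mechanical, is precisely the treatment of $\perp$: the pigeonhole count alone would yield $\lfloor n/(n-t)\rfloor+1$ if $\perp$ were permitted to coexist with genuine decisions, so the optimality of this algorithm rests entirely on \cref{DecBottom} forcing the all-$\perp$ and all-non-$\perp$ executions to be mutually exclusive. I would therefore state this dichotomy explicitly before combining it with the counting bound, rather than merging the two steps.
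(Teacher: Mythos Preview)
Your proof is correct and follows essentially the same route as the paper: both use \cref{equalL} to obtain a common vector $L$, apply a pigeonhole count $m(n-t)\le n$ to bound the number of high-multiplicity values, and invoke \cref{DecBottom} to establish the all-$\perp$/all-non-$\perp$ dichotomy that prevents $\perp$ from inflating the count. Your presentation is in fact a bit more explicit than the paper's (you spell out that the singleton $\{\perp\}$ case also satisfies the bound), but the underlying argument is identical.
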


\begin{proof}
	
	If a correct process decides $\perp$, all the correct processes will decide that value. Thus, exactly one value is decided.
	
	If a correct process decides a value different from $\perp$, by \cref{DecBottom}, all correct processes decide a value that have a frequency of at least $n-t$.
	
	We suppose that the correct processes decide on a specific number of distinct values $\alpha$.
	Given that each of the $\alpha$ values must appear at least $n-t$ times in $L$ for a decision to be made, the total count of these appearances is $\alpha(n-t)$. This count cannot exceed the total number of the process in the system, we have $\alpha(n-t) \le n$, leading to  $\alpha \le \lfloor \frac{n}{n-t} \rfloor $.
	From \cref{equalL}, no correct process will decide $\perp$.

\end{proof}

\subsubsection{Case of Consensus in crash failures or Byzantine failures with authentication}

When examining Consensus with the same validity condition, where the decided value must be the same if all correct processes propose the same value, the traditional bound of $t+1$ rounds to achieve consensus \cite{aguilera1999} applies. Building upon the result from the previous section, we have:
\begin{theorem}
  For crash failures or Byzantine failure with authentication, $1$-set Agreement (consensus) is solvable if and only if there is a majority of correct processes. Moreover the algorithm requires $t+1$ rounds.
\end{theorem}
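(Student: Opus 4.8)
The plan is to reduce the statement to results already established in this section, treating the two quantitative claims (solvability and round complexity) separately. First I would translate "majority of correct processes" into an arithmetic condition on $k$. Since at least $n-t$ processes are correct, a majority of correct processes means $n-t>t$, i.e. $2t<n$. I would then verify that this is exactly the condition $\lfloor\frac{n}{n-t}\rfloor=1$: one always has $\frac{n}{n-t}\ge 1$, and $\frac{n}{n-t}<2 \iff n<2(n-t) \iff 2t<n$, so $\lfloor\frac{n}{n-t}\rfloor=1$ precisely when the correct processes form a majority. This computation is routine and lets me pass freely between the two formulations. Throughout I would note that crash failure is a special case of Byzantine failure with authentication, so any algorithm for the latter also works for the former, and any impossibility for crash transfers up to Byzantine; hence both failure types are handled uniformly.

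For the "if" direction I would instantiate Algorithm~\ref{BSMP} with $k=\lfloor\frac{n}{n-t}\rfloor=1$. Validity and Termination follow directly from Lemma~\ref{lemVal} and Lemma~\ref{sim:term}, while the Agreement Lemma~\ref{sim:agre} guarantees that at most $\lfloor\frac{n}{n-t}\rfloor=1$ values are decided by correct processes, which is exactly the Agreement requirement of consensus. Thus, whenever $2t<n$, Algorithm~\ref{BSMP} solves $1$-set agreement; and since it is built on the TRB primitive of Algorithm~\ref{TRBalgo}, which terminates in $t+1$ rounds, the whole construction runs in $t+1$ rounds, establishing solvability and the round upper bound simultaneously.

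For the "only if" direction I would appeal to Theorem~\ref{th:lb}. If the correct processes do not form a majority, i.e. $2t\ge n$, then $\lfloor\frac{n}{n-t}\rfloor\ge 2$, so by that theorem any algorithm admits a run in which the correct processes decide at least two distinct values. This violates the Agreement requirement of consensus ($k=1$), so consensus is impossible. Hence solvability forces a majority of correct processes, which together with the previous paragraph yields the claimed equivalence.

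It remains to justify the matching round lower bound, and this is the step I expect to require the most care. The $t+1$ upper bound is already in hand; for the lower bound I would invoke the classical result that $t+1$ rounds are necessary for consensus even under crash failures~\cite{aguilera1999}. The subtlety is that the strong validity used here differs from the standard validity under which that bound is usually phrased, so I must check that the argument still applies. The standard indistinguishability (chain) argument connects the all-$0$ execution to the all-$1$ execution through a sequence of pairwise indistinguishable runs, pinning the endpoints down via validity: the execution in which all correct processes propose $0$ must decide $0$, and symmetrically for $1$. Our strong validity imposes exactly these two endpoint constraints, so the argument transfers verbatim, and since crash is a special case of authenticated Byzantine failure the bound holds in both models. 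Combining the $t+1$ upper and lower bounds shows the algorithm requires exactly $t+1$ rounds, completing the proof.
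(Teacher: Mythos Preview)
Your proposal is correct and follows the same approach the paper intends: the paper itself gives no formal proof here, merely writing that the theorem follows ``building upon the result from the previous section'' together with the classical $t+1$ round bound of~\cite{aguilera1999}. Your write-up simply makes explicit the two ingredients the paper leaves implicit---instantiating Algorithm~\ref{BSMP} and Lemma~\ref{sim:agre} at $k=\lfloor n/(n-t)\rfloor=1$ for solvability, and invoking Theorem~\ref{th:lb} for impossibility when $2t\ge n$---and your additional remark that the strong-validity condition still pins down the endpoints of the standard indistinguishability chain is a careful check the paper omits.
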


%proofs

% \begin{lemma}
%     If a correct process $p_i$  delivers $m$, then every other correct process will eventually deliver $m$.
% \end{lemma} 
% \begin{proof}
%     If $p_i$ decides $v$, either the general was correct and in order to ensure the Validity all the correct processes decide that value or the general is Byzantine, but since the messages are authenticated, the Byzantine is easily detected if it sends to different values to two correct processes if a  Byzantine process value is decided then all the correct process have that value and decide the same.
% \end{proof}

%Crash

%%% Local Variables:
%%% mode: latex
%%% TeX-master: "main"
%%% End:

\section{$k$-Set Agreement in a Crash failures Asynchronous Read-Write shared memory models} \label{sec:RW}
In this section, we present an algorithm that operates in an
Asynchronous Read-Write (Shared Memory) setting, specifically designed
to handle crash failures. This algorithm is applicable to a system
consisting of $n$ processes, among which up to $t$ processes may
crash.
%We denote this setting as $\mathcal{CARW}{n,t}$.  
It is known that in the asynchronous system, there is no algorithm for solving consensus when the system is subject to crash failures\cite{fischer1982impossibility}. The authors in \cite{AsynchKset} investigates the $k$-set agreement in an asynchronous system, exploring several variations of the problem definition by varying the validity condition and the system model.
In our case, we are interested in the following validity: \textit{if all the correct processes propose the same value, that value is decided}, and the shared memory model, where they presented an impossibility result. 

\begin{theorem} \cite{AsynchKset}
	In the Shared Memory/Crash model, there is no protocol for
	solving $k$-set agreement when $t \ge \frac{n}{2}$ and $t \ge k$.
\end{theorem}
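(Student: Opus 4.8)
The plan is to argue by contradiction: assume a $t$-resilient protocol $\mathcal{A}$ solves $k$-set agreement with the strong validity condition in asynchronous read-write shared memory, under the two hypotheses $t \ge n/2$ and $t \ge k$, and then exhibit an admissible execution producing $k+1$ distinct decided values, contradicting the agreement property. The two hypotheses play complementary roles, and making this split explicit is the heart of the argument. The arithmetic that drives everything is that $t$ is an integer with $t \ge n/2$, hence $t \ge \lceil n/2 \rceil$ and equivalently $n - t \le t$; this lets asynchrony isolate entire blocks. The condition $t \ge k$ is the classical threshold at which asynchronous $k$-set agreement is impossible \cite{BG93,HS93,saks2000wait}, and it is the ingredient that ultimately forces \emph{more than} $k$ deciders to disagree.

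First I would pin down the ``corner'' executions that strong validity makes forced, which is exactly where $t \ge n/2$ enters. Fix $k+1$ pairwise distinct values $v_0,\dots,v_k$ (they exist since $|V| \ge k+1$). For each $i$, consider a group $S_i$ of $n-t$ processes all proposing $v_i$, and a schedule in which only $S_i$ takes steps while the remaining $t$ processes are delayed so long that they are invisible to $S_i$. Since $n - t \le t$, every such complementary block of size $t$ may legitimately be treated as crashed, so this schedule is a genuine $t$-resilient execution; and because the only processes that act all propose $v_i$, strong validity together with termination forces every process of $S_i$ to decide $v_i$. This mirrors the solo-block construction of \cref{th:lb}, but here it is asynchrony — not synchronous rounds — that makes each block decide ``alone.'' The essential difficulty, just as in the synchronous lower bound, is that the groups $S_i$ cannot be pairwise disjoint: disjointness would require $(k+1)(n-t) \le n$, which fails in general, so a single process is asked to decide different values across different corners and the corners cannot simply be superposed.

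This is precisely where $t \ge k$ and the asynchronous machinery take over, and I would invoke the classical impossibility as a black box. Viewing the reachable decision configurations of $\mathcal{A}$ as a subdivided simplex over the input assignments $v_0,\dots,v_k$, the hypothesis $t \ge k$ guarantees that the decision map cannot avoid a panchromatic simplex (a Sperner-type obstruction underlying \cite{BG93,HS93,saks2000wait}): there must be one admissible execution that is indistinguishable, to each of $k+1$ chosen deciders, from one of the corner runs above, so those $k+1$ deciders output the $k+1$ distinct values $v_0,\dots,v_k$. The corners are supplied by the previous paragraph, and $t \ge n/2$ is what certifies that all the intermediate partial configurations used to connect them are themselves realizable as $t$-resilient executions, since every decider may be made to run while its complement (of size at most $t$) is invisible. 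Assembling these yields a single execution with $k+1$ decided values, the desired contradiction.

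The main obstacle is the bridging step between the two validity notions. The classical impossibility is stated for the \textbf{standard} validity condition (every decision is some proposed value), whereas here the decisions are constrained only by strong validity, which a priori lets a protocol escape disagreement by collapsing every mixed-input configuration onto a single default value. The real work is to show that strong validity, activated through the solo-block corners made available by $t \ge n/2$, pins down enough of the boundary of the decision complex to drive the same Sperner-type contradiction; concretely, one must rule out the default-value escape, and it is exactly $n - t \le t$ that forbids it while keeping every constructed schedule within the $t$-crash budget. I would expect the remaining obligations — verifying shared-memory indistinguishability of the glued schedules and that each respects the crash bound — to be routine once this bridge is established.
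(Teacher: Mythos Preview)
The paper does not give its own proof of this statement: it is quoted as a result of \cite{AsynchKset} and used only as background for the shared-memory algorithm of Section~\ref{sec:RW}. There is therefore nothing in the paper to compare your argument against.

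On the proposal itself, the decomposition is the right one: $t \ge n/2$ (equivalently $n-t \le t$) is exactly what lets each block of $n-t$ processes run ``alone'' so that strong validity pins the corner decisions, and $t \ge k$ is what should trigger the topological obstruction. But you correctly identify---and then leave open---the only non-routine step. The impossibilities of \cite{BG93,HS93,saks2000wait} are stated for \emph{standard} validity (every decision is some proposed value); a protocol that satisfies only strong validity may send every mixed-input configuration to a fixed default $\perp$, which makes the decision map contractible and dissolves the Sperner obstruction. Saying ``I would invoke the classical impossibility as a black box'' and then ``I would expect the remaining obligations to be routine once this bridge is established'' is not a proof of the bridge: it is precisely the bridge that carries all the content. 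To close it you must either (i) give an explicit reduction---e.g.\ simulate from your $t$-resilient strong-validity protocol a wait-free $k$-set agreement protocol among $t{+}1 > k$ processes with standard validity---or (ii) redo the index/Sperner computation showing that the corner constraints supplied by $n-t \le t$ already force a full-colour simplex despite the possible default-value escape on the interior. Either route is feasible, but neither is a black-box citation.
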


And a possibility one: 

\begin{theorem} \cite{AsynchKset}
	There exists a protocol that can solve $k$-set agreement for $t < \frac{k-1}{2k}n$.
\end{theorem}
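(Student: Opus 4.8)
The plan is to exhibit an explicit snapshot-based protocol and then verify Termination, Agreement, and Validity separately. First I would rewrite the hypothesis in the form that drives the whole argument: multiplying $t<\frac{k-1}{2k}n$ by $2k>0$ gives $n<k(n-2t)$, and since the hypothesis already forces $t<\frac{1}{2}n$ we have $n-2t>0$, so the assumption is equivalent to $k>\frac{n}{n-2t}$, i.e. $\lfloor \frac{n}{n-2t}\rfloor \le k-1$. The protocol I would use is the natural one for this model: each process $p_i$ first performs an \emph{update} with its proposal $v_i$, then repeatedly takes a \emph{snapshot} until its view contains at least $n-t$ values (always eventually possible, since at least $n-t$ correct processes write). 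From its final view $p_i$ decides by a frequency rule with threshold $n-2t$: if some value occurs at least $n-2t$ times it decides such a value, chosen deterministically (say, most frequent, ties broken by value), and otherwise it decides a fixed default $\perp$.

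For \textbf{Agreement} I would exploit the Inclusion Property of the atomic snapshot. The views that correct processes use for their decisions are totally ordered by inclusion, so there is a largest such view $W$; since $W$ holds at most one value per process, it has at most $n$ entries, and ``a value occurs $m$ times in $W$'' means $m$ distinct processes wrote it. Any non-$\perp$ value decided by a correct process crossed the threshold in that process's view, which is a subset of $W$, and frequencies only grow under inclusion, so each such value occurs at least $n-2t$ times in $W$. Occurrences of distinct values are disjoint, so the number of distinct non-$\perp$ decisions is at most $\lfloor \frac{n}{n-2t}\rfloor \le k-1$; adding $\perp$ gives at most $k$ decided values. \textbf{Termination} is immediate, as every correct process obtains a view of size at least $n-t$ after finitely many snapshots and then decides.

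The main obstacle is \textbf{Validity}, and it is precisely where the range $t<\frac{1}{2}n$ (rather than the easy $t<\frac{1}{3}n$) bites. Suppose all correct processes propose $v$; a correct process sees at least $n-2t$ copies of $v$, so $v$ does cross the threshold, but a view of size $n-t$ may also contain up to $t$ values written by faulty processes before crashing, and when $n\le 3t$ these can form a second value $v'$ that also reaches the threshold $n-2t$. A single snapshot then cannot distinguish $v$ from $v'$, so the one-shot rule above is not by itself sufficient for Validity in the full range. The repair I would pursue is to iterate the write/snapshot step over a bounded number of rounds while carrying an estimate, maintaining the invariant that a correct process never abandons a value it cannot see contradicted by a large correct majority: the key leverage is that a value supported only by the at most $t$ faulty processes can appear at most $t$ times in total, whereas the globally held $v$ is written by at least $n-t>t$ processes, so the faulty set cannot persistently mask the correct majority across rounds. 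Making this invariant precise — fixing the per-round adoption rule and the number of rounds so that convergence to $v$ is forced in the unanimous case while the Agreement count of at most $k$ is preserved — is the technical heart of the proof, and I expect it to be the step that consumes most of the work.
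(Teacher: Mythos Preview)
The paper does not prove this theorem: it is quoted verbatim from \cite{AsynchKset} as prior work, with no proof supplied. What the paper \emph{does} prove is the stronger bound $k>\lfloor\frac{n-t}{n-2t}\rfloor$ via Algorithm~\ref{CSRW}, and that algorithm is structurally the same one-shot snapshot protocol you propose, so it is the natural point of comparison.

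Your Termination and Agreement arguments are fine. The gap is exactly where you locate it --- Validity when $n\le 3t$ --- but your proposed repair (iterating write/snapshot over multiple rounds with an evolving estimate) is a substantial over-engineering of a one-line fix. The issue is not that a single snapshot lacks information; it is only that your tie-breaking rule (``most frequent, ties broken by value'') is the wrong one. Replace it by: \emph{first} check whether your \emph{own} proposal $v_i$ meets the threshold, and decide $v_i$ if so; only otherwise fall back to some other value meeting the threshold, or $\perp$. Now Validity is immediate: if all correct processes propose $v$, each correct $p_i$ has $v_i=v$, its view contains at most $t$ values from faulty processes, hence at least $x_i-t\ge n-2t$ copies of $v$, so $p_i$ decides $v$ at the first test and never even looks at a competing $v'$. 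Agreement is unaffected, since any value decided (own or not) still crosses the threshold in the decider's view and hence in the largest view $W$.

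This ``prefer your own initial value'' rule is precisely what the paper's Algorithm~\ref{CSRW} does (Lines~\ref{decideinit}--\ref{decval}); the paper additionally sharpens the threshold from the fixed $n-2t$ to the view-dependent $x_i-t$, which is what buys the improved bound $\frac{n-t}{n-2t}$ in place of your $\frac{n}{n-2t}$. So your single-round protocol, with the corrected tie-break, already proves the cited theorem; no multi-round machinery is needed.
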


They left a small gap between their possibility and impossibility results. In this section, we managed to find an algorithm for the case for $ t < \frac{k-1}{2k-1}n$, which is equivalent to $k > \lfloor \frac{n-t}{n-2t} \rfloor$.
Our algorithm uses the snapshot primitive, in a look-alike first phase of a round then exploiting the result returned by the snapshot each process makes a decision.

\subsection*{The Algorithm}
Algorithm \ref{CSRW} solves $k$-set agreement for $k > \lfloor \frac{n-t}{n-2t} \rfloor$, since we are in an asynchronous system, processes can not wait indefinitely, as a process cannot distinguish between a correct process that is slow and a process that crashed. Thus if a process receives at least $n-t$ values it moves to a decision-making step.

% Algorithm 

\begin{algorithm2e}[]
	
	\SetAlgoLined
	\KwIn{ Initial value $m$}
	\KwResult{decide}
	\Comment{Shared  variables}  
	$S$ : snapshot \\
	\BlankLine
	\Comment{Local variables}  
	
	$X_i $and $L_I$ are two vectors $[1,\dots,n]$ initialized to $\perp$ \\
	$x_i = 0$
	\BlankLine
%	$R_i \leftarrow S.update(i,m) $\label{write} \\
	$S.update(m) $\label{write} \\
	\BlankLine
	\While{  $|  j, L_i[j] \neq \perp  |  < n-t$ \label{snapshot}}{$L_i \leftarrow     S.snapshot()$\\}
	$X_i \gets L_i$ \label{snapX}\\
	
	$x_i \leftarrow \sharp$  of values $\neq \perp$ in $X_i$ \label{nbr-val}
	
	\BlankLine

	\uIf{$p_i$ finds in $X_i$, $x_i-t$ values equal to initial value $m$ }{  
		decide \textit{$m$;} \label{decideinit}
	}  
	\uElseIf{$p_i$ finds in $X_i$, $x_i-t$ values equal to any value $v$}{decide $v$; \label{decval}} 
	\Else{
		decide $\perp$ \; \label{bottom}
	}

	\label{CSRW}
		\caption{Solving $k$-set Agreement  in
		% $\mathcal{CARW}_{n,t}$}
	Crash failures shared memory models.}
\end{algorithm2e}

\subsection*{Proof of the algorithm}

We begin with a snapshot object denoted as $S$. 
%Each process $p_i$ starts by updating its register $R_i$ with the pair $(i, m)$, where $m$ represents its initial value.
Each process $p_i$ starts by updating the snapshot $S$ with its initial value  $m$ by invoking the operation $update(m)$.  It then calls the function $snapshot()$ and stores the resulting view in a vector $X_i$ of size $n$, initially set to $\perp$. The processes continue invoking the snapshot function until $n-t$ processes have updated the snapshot.

At line \ref{nbr-val}, process $p_i$ calculates $x_i$, the number of values in $X_i$ that are different from $\perp$,and it has to be $\ge n-t$ because of the condition in line \ref{snapshot}. If $p_i$ finds $x_i-t$ values equal to its initial value, it decides on that value. If not, it checks if there are $x_i-t$ values equal to a value $v$ and makes a decision based on that value. If neither of these conditions are met, it decides on $\perp$ as the outcome.

It is important to note that due to the inclusion property,  if a snapshot $X_i$ is taken at an earlier point in time than another snapshot $X_j$, then the values represented by $X_i$ is included in the values represented by $X_j$, and we say that $X_i$ is smaller that $X_j$.
If the process $p_i$ decides $v$ it has to  find at least  $x_i-t $ values $v$ in $X_i$. By the inclusion property, the process $p_j$ finds also at least $x_i-t$ values $v$ on $X_j$, but to decide $v$, it has to find $x_j-t $ values $v$.
%\begin{lemma}
%	Let $p_i$ be a correct process that decides $\perp$. For every correct process $p_j$ in $\Pi$ with $X_i< X_j$, $p_j$ will also decide $\perp$.
%	\label{decBot}
%\end{lemma}
%
%\begin{proof}
%	Let $p_i$ be a correct process that decides $\perp$. Let $p_j$ be a correct process with $X_i<X_j$, lets prove that $p_j$ decides also $\perp$. Let $\alpha_v$ be the number of occurrences of $v$ in $X$.
%	Since $p_i$ decided $\perp$, then $\forall v \in X_i$ ,  $\alpha_v^i  < x_i - t $, with  $\alpha_v^i $  the number of occurrences of $v$ in $X_i$..
%	Since $X_i<X_j$, from the snapshot property: 
%	
%	\begin{equation}
%		\begin{split}
%			& 	\alpha_v^j - \alpha_v^i \le x_j - x_i  \Rightarrow  \alpha_v^j - x_j \le \alpha_v^i - x_i \\
%			 &  \text{ Since }  \alpha_v^i < x_i - t \\
%			& \text{Then }  \alpha_v^j - x_j < -t \Rightarrow   \alpha_v^j < x_j -t  
%		\end{split}
%	\end{equation}
%	 For every value $v$ in $X_j$, the process $p_j$ will find less than $x_j - t$ values, thus, it will decide $\perp$.
%	
%	
%\end{proof}

\begin{lemma}
	Let $p_i$ be a correct process that decides a value $v$, $v\neq \perp$, for every correct process $p_j$ with $X_i> X_j$, $p_j$ will not decide  $\perp$. 
	\label{decnotBot}
\end{lemma}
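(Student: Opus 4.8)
The plan is to combine the decision rule that $p_i$ passed with the inclusion property of the snapshot, and then close the argument with a counting inequality on how many copies of $v$ can be lost when passing from the larger view $X_i$ to the smaller view $X_j$. First I would record what $p_i$'s decision buys us: since $p_i$ decides $v \neq \perp$, it must have passed one of the two non-$\perp$ branches of Algorithm~\ref{CSRW} (line~\ref{decideinit} or line~\ref{decval}), so $X_i$ contains at least $x_i - t$ entries equal to $v$, where $x_i$ is the number of non-$\perp$ entries of $X_i$ computed in line~\ref{nbr-val}.

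Next I would unfold the hypothesis $X_i > X_j$. By the paper's own convention this means $X_j \subseteq X_i$, and since both processes follow the ``one $update$ then repeated $snapshot$'' pattern required by the Inclusion Property, $X_j$ is a genuine subview of $X_i$: every process whose register $p_j$ reads as non-$\perp$ is also read by $p_i$ with the same value (each process updates $S$ exactly once, with a fixed value, so the two views cannot disagree on a shared position). Consequently $X_i$ and $X_j$ coincide on all $x_j$ positions where $X_j$ is non-$\perp$, and $X_i$ has exactly $x_i - x_j$ additional non-$\perp$ positions that are still $\perp$ in $X_j$.

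The key step is the count. Among the at least $x_i - t$ positions of $X_i$ carrying the value $v$, at most $x_i - x_j$ can fall among the surplus positions that are $\perp$ in $X_j$. Hence at least $(x_i - t) - (x_i - x_j) = x_j - t$ of these $v$-positions lie among the positions where $X_j$ is non-$\perp$, and there $X_j$ also carries $v$. Therefore $X_j$ contains at least $x_j - t$ entries equal to $v$, so when $p_j$ evaluates its decision rule it finds a value, namely $v$, repeated at least $x_j - t$ times; thus line~\ref{decideinit} or line~\ref{decval} succeeds and $p_j$ cannot reach line~\ref{bottom}, i.e.\ it does not decide $\perp$.

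The only delicate point is keeping the direction of $X_i > X_j$ straight and justifying the counting bound: one must argue that deleting the $x_i - x_j$ surplus entries of $X_i$ destroys at most that many copies of $v$, which is exactly where the monotonicity guaranteed by the Inclusion Property (and the fact that each process writes a single fixed value) is indispensable. Everything else is arithmetic on $x_i$ and $x_j$.
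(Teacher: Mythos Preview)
Your proof is correct and follows essentially the same approach as the paper's: from $p_i$'s decision you get $\alpha_v^i \ge x_i - t$, from the inclusion $X_j \subseteq X_i$ you bound the loss of $v$-entries by $x_i - x_j$, and you conclude $\alpha_v^j \ge x_j - t$, which prevents $p_j$ from reaching the $\perp$ branch. The paper phrases the counting step as the inequality $\alpha_v^i - \alpha_v^j \le x_i - x_j$ and then rearranges, while you spell out the same count position-by-position, but the argument is identical.
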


\begin{proof}
	Let $p_i$ be a correct process that decides $v$. Let $p_j$ be a correct process with $X_j<X_i$, lets prove that $p_j$ will not decide $\perp$. Let $\alpha_v$ be the number of occurrences of $v$ in $X$.
Since $p_i$ decided $v$, then $\forall v \in X_i$ ,  $\alpha_v^i \ge x_i - t $.
  Since $X_j<X_i$

\begin{equation}
	\begin{split}
		& 	\alpha_v^i - \alpha_v^j \le x_i - x_j  \Rightarrow  \alpha_v^i - x_i \le \alpha_v^j - x_j \\
		&  \text{ Since }  \alpha_v^i \ge x_i - t \\
		& \text{We obtain }  \alpha_v^j - x_j \ge -t \Rightarrow   \alpha_v^j \ge x_j -t  
	\end{split}
\end{equation}
That does not necessary implies that $p_j$ will decide $v$, since it looks first for its initial value, but as there is at least $\alpha$ values of $v$ in $X_j$ such that $ \alpha_v^j \ge x_j -t$, $p_j$ will not decide $\perp$.  
	
\end{proof}

Let $X_i$ be the smallest snapshot among all the snapshots in \cref{snapX}.
\begin{lemma}
	No correct process decides a value $v$, if  $\alpha_v^i < x_i - t$.
	\label{occuS}
\end{lemma}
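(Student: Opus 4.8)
The plan is to fix an arbitrary correct process $p_j$ and show that it cannot pass the decision test for $v$. Since $X_i$ is the smallest snapshot, the Inclusion Property—together with the atomicity of $S$, which linearizes all the snapshot invocations and makes the views monotone along that linearization order—guarantees that $X_i \le X_j$, i.e. every non-$\perp$ entry of $X_i$ is also present in $X_j$. In particular $x_i \le x_j$, and the entries of $X_j$ are exactly those of $X_i$ together with $x_j - x_i$ further non-$\perp$ entries, contributed by processes whose $update$ had not yet been observed by the owner of $X_i$.

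First I would bound $\alpha_v^j$ from above. Each of the $x_j - x_i$ additional entries carries some value, and in the worst case all of them equal $v$; hence
\[
\alpha_v^j \le \alpha_v^i + (x_j - x_i).
\]
Substituting the hypothesis $\alpha_v^i < x_i - t$ then yields
\[
\alpha_v^j < (x_i - t) + (x_j - x_i) = x_j - t.
\]
To decide $v$ at \cref{decideinit} or \cref{decval}, process $p_j$ must find at least $x_j - t$ copies of $v$ in $X_j$. Since $\alpha_v^j < x_j - t$, it fails this test; as $p_j$ was an arbitrary correct process, no correct process decides $v$. This is the same monotonicity argument as in \cref{decnotBot}, run downward from the minimal snapshot rather than upward.

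The only delicate point, and the step I would treat most carefully, is the claim that $X_i$ is contained in the snapshot $X_j$ of every other correct process. This is not immediate from the purely temporal phrasing of the Inclusion Property on its own; it relies on the linearizability of the atomic snapshot object, which totally orders all $snapshot()$ operations so that a minimum view exists and sits below all the others (and so that $X_j$ is well defined, every correct process eventually reading at least $n-t$ values because the $n-t$ correct processes all update $S$). Once that containment is in hand, the remaining counting inequality is purely arithmetic and presents no obstacle.
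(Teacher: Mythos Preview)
Your proof is correct and is essentially the same as the paper's: both use the inclusion $X_i\le X_j$ to derive $\alpha_v^j \le \alpha_v^i + (x_j - x_i)$ and then combine this with the hypothesis $\alpha_v^i < x_i - t$ to obtain $\alpha_v^j < x_j - t$. The paper phrases it as a proof by contradiction (assume $p_j$ decides $v$, so $\alpha_v^j \ge x_j - t$, then reach a contradiction), whereas you argue directly, but the arithmetic core is identical; your added remark justifying why a minimal snapshot exists and is contained in every other one is a welcome clarification that the paper leaves implicit.
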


\begin{proof}
	
	Let $p_j$ be a correct process with $X_j \ge X_i$.
	Let us suppose by contradiction that  $p_j$ decides $v$. Then $\alpha_v^j \ge x_j - t$.
	
	Since $X_i \le X_j$:
	\begin{equation}
		\begin{split}
			& 	\alpha_v^j - \alpha_v^i \le x_j - x_i  \Rightarrow  \alpha_v^j - x_j \le \alpha_v^i - x_i \\
			&  \text{ Since }  \alpha_v^i  < x_i - t \\
			& \text{Then }  \alpha_v^j - x_j < -t \Rightarrow   \alpha_v^j < x_j -t  
		\end{split}
	\end{equation}
	
	Contradicting the fact that   $\alpha_v^j \ge x_j - t$.
\end{proof}

\begin{lemma}
	[Validity] If all the correct processes propose the same value, that value is decided.
\end{lemma}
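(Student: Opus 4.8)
The plan is to fix an arbitrary correct process $p_i$ and show that it necessarily decides through the first branch of the algorithm (\cref{decideinit}), i.e.\ that it finds $x_i - t$ entries equal to its own initial value. Since all correct processes are assumed to propose the same value $v$, the initial value $m$ of $p_i$ is exactly $v$, so it suffices to show that $X_i$ contains at least $x_i - t$ occurrences of $v$. Because $p_i$ is an arbitrary correct process, this immediately gives that every correct process decides $v$, which is the claim.

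The heart of the argument is a counting step on the snapshot $X_i$. First I would note that when $p_i$ exits the loop guarded by \cref{snapshot} and sets $X_i \gets L_i$ in \cref{snapX}, the vector $X_i$ holds $x_i \ge n-t$ non-$\perp$ entries (this is guaranteed by the loop condition and recorded in \cref{nbr-val}). Each non-$\perp$ entry of $X_i$ corresponds to the register of a distinct process that has already invoked $update$, and since a process writes only its own register, each process contributes at most one such entry. At most $t$ of the processes are faulty, so at most $t$ of these $x_i$ entries can originate from faulty processes; the remaining at least $x_i - t$ entries come from correct processes. As every correct process invokes $update(v)$, all of those entries are equal to $v$. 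Hence $X_i$ contains at least $x_i - t$ occurrences of $v$.

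With this count in hand, the test in \cref{decideinit} (``$p_i$ finds $x_i - t$ values equal to initial value $m$'') is satisfied for $m = v$, so $p_i$ decides $v$ before ever reaching the later branches. I would also invoke part~(1) of the Inclusion Property to observe that $p_i$'s own value $v$ is in $X_i$, which is consistent with the count but not strictly needed. The main obstacle — really the only delicate point — is justifying that at most $t$ of the non-$\perp$ entries can differ from $v$; this rests on the fact that registers are indexed per process and each faulty process can ``spoil'' at most one entry, together with the crash (non-Byzantine) assumption that correct processes write exactly their proposed value $v$. Everything else is a direct substitution into the decision rule, and the conclusion follows uniformly for every correct process.
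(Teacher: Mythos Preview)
Your argument is correct and follows the same approach as the paper: both show that among the $x_i$ non-$\perp$ entries of $X_i$ at most $t$ can come from faulty processes, hence at least $x_i-t$ equal $v$, so $p_i$ decides $v$ at \cref{decideinit}. Your write-up is simply more explicit about the per-process register indexing and the crash (non-Byzantine) assumption than the paper's terse version.
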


\begin{proof}
	If all the correct processes propose the same value $v$, then there are at most $t$ values ( proposed by the faulty processes) that can be different from $v$.
	So for a correct process $p_i$, in its snapshot $X_i$ there are at most $t$ values different from $v$, therefore $x_i-t$ values of $X_i$ are equal to $v$ and $p_i$ decides $v$ Line  \ref{decideinit}.
%	Let $X_i$ be the largest snapshot among all the snapshots for correct processes, we show that for every correct process that proposes $v$, that process will decide $v$.
%	Similary to lemma \ref{decnotBot}, we show that for every correct process $p_i$ with $X_i < X_j$, $p_i$ will decide $v$ since $ \alpha_v^i \ge x_i -t  $, since $p_i$ proposed $v$ it will %decide that value as  its initial value in  line \ref{decideinit}. 
\end{proof}

\begin{lemma}
	[Agreement] At most $k$ values decided with $k > \lfloor \frac{n-t}{n-2t} \rfloor$.
\end{lemma}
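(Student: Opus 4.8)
The plan is to bound separately the number of non-$\perp$ values decided and then account for $\perp$. First I would fix the correct process $p_i$ whose snapshot $X_i$ is the \emph{smallest} among all snapshots taken at \cref{snapX}; such a process exists because the Inclusion Property linearly orders all snapshots by containment. Let $x_i \ge n-t$ be its number of non-$\perp$ entries. By \cref{occuS}, any value $v \neq \perp$ that is decided by some correct process must satisfy $\alpha_v^i \ge x_i - t$, i.e. it already occurs at least $x_i - t$ times in this minimal view.

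The main step is then a disjoint-occurrence counting argument anchored at this smallest snapshot. Suppose $\alpha$ distinct non-$\perp$ values are decided. Their occurrences inside $X_i$ are pairwise disjoint, since each of the $x_i$ non-$\perp$ cells holds a single value, and each decided value occupies at least $x_i - t$ cells, so
\[
\alpha\,(x_i - t) \le x_i .
\]
The operating regime $t < \frac{k-1}{2k-1}n$ (equivalent to the hypothesis $k > \lfloor \frac{n-t}{n-2t}\rfloor$) forces $t < n/2$ for $k \ge 2$, hence $x_i - t \ge n - 2t > 0$, and the inequality rearranges to $\alpha \le \frac{x_i}{x_i - t}$. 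Since $x \mapsto \frac{x}{x-t} = 1 + \frac{t}{x-t}$ is decreasing and $x_i \ge n-t$, we obtain $\alpha \le \frac{n-t}{n-2t}$, and as $\alpha$ is an integer, $\alpha \le \lfloor \frac{n-t}{n-2t}\rfloor$.

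Finally I would add back $\perp$: this value can contribute at most one further decided value, so the total number of distinct values decided by correct processes is at most $\lfloor\frac{n-t}{n-2t}\rfloor + 1$. Since the hypothesis is $k > \lfloor\frac{n-t}{n-2t}\rfloor$, i.e. $k \ge \lfloor\frac{n-t}{n-2t}\rfloor + 1$, this yields at most $k$ decided values, as required. This extra $+1$ is precisely why the bound is $k > \lfloor\frac{n-t}{n-2t}\rfloor$ rather than equality; moreover \cref{decnotBot} shows that a process with a strictly smaller snapshot than a non-$\perp$ decider cannot choose $\perp$, which controls how $\perp$ can coexist with genuine decisions.

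I expect the main obstacle to be justifying that the count must be performed against the \emph{smallest} snapshot rather than each process's own view. It is \cref{occuS}, anchored at the minimal snapshot, that provides a single uniform lower bound $x_i - t$ on the occurrence count valid simultaneously for every decided value; without this anchoring the per-process thresholds $x_j - t$ vary with $j$, the occurrences are counted in different-sized views, and the disjointness inequality no longer closes to $\lfloor\frac{n-t}{n-2t}\rfloor$.
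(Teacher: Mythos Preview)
Your argument is correct and follows essentially the same route as the paper: anchor at the smallest snapshot, invoke \cref{occuS} to get a uniform occurrence threshold $x_i-t$, use the disjoint-counting inequality $\alpha(x_i-t)\le x_i$, and apply monotonicity of $x\mapsto x/(x-t)$ with $x_i\ge n-t$. Your treatment is in fact slightly more complete than the paper's, since you explicitly justify $x_i-t>0$ and separately account for the possible extra value $\perp$ (which the paper's proof glosses over), making transparent why the hypothesis is a strict inequality $k>\lfloor\frac{n-t}{n-2t}\rfloor$.
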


\begin{proof}
Let $X_i$ be the smallest snapshot among all the snapshots in \cref{snapX}.
From \cref{occuS}, a value $v$ can only be decided  if $\alpha_v^i \ge x_i -t$.

In $X_i$, there can be at most $\lfloor \frac{x_i}{x_i - t}\rfloor$ different values $v$ such that $\alpha_v^i \ge x_i -t$.

As $n-t \le x_i \le n$ and $\lfloor \frac{x_i}{x_i - t}\rfloor$ is a decreasing function, the maximum number of different values decided is obtained when $x_i = n-t$, thus, $\frac{n-t}{n-2t}$ is the maximum number of decided values.
\end{proof}

\begin{lemma}
	[Termination] All the correct processes will eventually decide.
\end{lemma}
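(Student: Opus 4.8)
The plan is to observe that a correct process executing Algorithm~\ref{CSRW} performs only three kinds of actions: a single $update(m)$, a potentially unbounded wait inside the \textbf{while} loop at line~\ref{snapshot}, and a bounded amount of purely local computation culminating in a decision. By the Termination property of the snapshot object, the initial $update(m)$ at line~\ref{write} and each individual $snapshot()$ invocation return in finite time, and the final three-way conditional is local. Hence the only way a correct process could fail to decide is by looping forever at line~\ref{snapshot}, and the whole argument reduces to showing that this loop eventually exits.

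To show the loop terminates, I would argue that its guard $|\{j : L_i[j] \neq \perp\}| < n-t$ eventually becomes false. Since at most $t$ processes are faulty, at least $n-t$ processes are correct, and by the Termination property of the snapshot each such correct process completes its own $update(m)$ after a finite prefix of the run. Let $T$ be a point in the run after which all correct processes have completed their updates. By the Atomicity (sequential-specification) property of the snapshot, any $snapshot()$ linearized after $T$ returns a view that contains the value written by every update linearized before it, and in particular the values of all $n-t$ correct processes. Thus from $T$ onward every snapshot taken by $p_i$ returns at least $n-t$ entries different from $\perp$, so after its next iteration the guard fails and $p_i$ leaves the loop.

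Once $p_i$ exits the loop it assigns $X_i$ and $x_i$ and reaches the three-way conditional; one of the branches (decide $m$, decide some $v$ repeated $x_i-t$ times, or decide $\perp$) necessarily fires, so $p_i$ decides and halts. Combining the two observations, every correct process decides in finite time.

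The main obstacle is the middle paragraph: making the claim ``eventually the snapshot sees $n-t$ non-$\perp$ values'' fully rigorous. This is where I must lean on the snapshot semantics rather than on an informal appeal to ``eventually everyone has written.'' Concretely, I would invoke the Inclusion property (part~(1), so that $p_i$'s own value is always in its view) together with the Atomicity/linearization guarantee to conclude that the updates of the correct processes become visible to $p_i$'s subsequent snapshots; without atomicity one could not rule out a snapshot linearized so as to miss some completed updates. Note that no assumption relating $t$ and $n$ beyond $t<n$ is needed here, since the loop threshold $n-t$ is matched exactly by the guaranteed number of correct processes.
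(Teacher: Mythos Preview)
Your proposal is correct and follows essentially the same approach as the paper: at most $t$ failures means at least $n-t$ correct processes complete their $update$, so eventually every correct process sees $n-t$ non-$\perp$ entries, exits the loop, and decides. The paper compresses this into a single sentence, whereas you spell out the reliance on the snapshot's Termination and Atomicity properties, but the underlying argument is identical.
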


\begin{proof}
	The number of failures is at most $t$, we have then at least $n-t$ correct processes that update the snapshot  $S$ so 
	a correct process will find a snapshot of $S$ of size at least $n-t$ and decides
	
\end{proof}

By the above Lemmas, we get the following Theorem.

\begin{theorem} \label{th-csrw}
	For $k >\frac{n-t}{n-2t}$, 
	Algorithm~{\em{\ref{CSRW}}} ensures $k$-set agreement.
\end{theorem}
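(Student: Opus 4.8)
The statement is the capstone of \cref{sec:RW}: it asserts that Algorithm~\ref{CSRW} satisfies the three defining properties of $k$-set agreement for every $k>\frac{n-t}{n-2t}$. The plan is therefore to verify Validity, Agreement, and Termination in turn, each of which has essentially been isolated as a preceding lemma, and then to conclude by assembling them. I would first dispatch Termination, since it depends only on the liveness of the primitive: at most $t$ processes crash, so at least $n-t$ correct processes eventually invoke $S.update(m)$; by the Termination property of the snapshot object, repeated calls to $S.snapshot()$ eventually return a view containing at least $n-t$ non-$\perp$ entries, so the guard on \cref{snapshot} is eventually satisfied and every correct process reaches the decision block.

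Next I would handle Validity. If all correct processes propose the same $v$, then in any snapshot $X_i$ the only entries that can differ from $v$ are those written by the at most $t$ faulty processes; hence at least $x_i-t$ of the $x_i$ non-$\perp$ entries equal $v$, and the test on \cref{decideinit} succeeds, forcing the decision $v$. The delicate part is Agreement, and this is where I expect the main obstacle to lie. The key structural fact is that the Inclusion Property totally orders the snapshots taken on \cref{snapX}, so there is a well-defined \emph{smallest} snapshot $X_i$. By \cref{occuS}, a value $v$ can be decided only if it already occurs at least $x_i-t$ times in this smallest snapshot. I would then bound the number of such ``frequent'' values by a simple counting argument: if $\alpha$ distinct values each appear at least $x_i-t$ times among the $x_i$ entries, then $\alpha(x_i-t)\le x_i$, giving $\alpha\le\lfloor\frac{x_i}{x_i-t}\rfloor$.

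The final quantitative step is to pass from $x_i$ to the worst case. Since $n-t\le x_i\le n$ and $x\mapsto\lfloor\frac{x}{x-t}\rfloor$ is non-increasing in $x$ (as increasing $x$ enlarges numerator and denominator by the same amount while the denominator grows proportionally faster), the bound is maximized at $x_i=n-t$, yielding at most $\lfloor\frac{n-t}{n-2t}\rfloor<k$ decided values. The subtle points I would be most careful about are: verifying that the smallest snapshot genuinely governs \emph{all} decisions and not merely those of processes ordered after it (this is exactly the content of \cref{occuS}, and it rests on linearizability of the snapshot, not just on pairwise inclusion); checking the monotonicity claim for the floor function rather than the real-valued function; and noting the implicit hypothesis $n>2t$ so that $x_i-t>0$ and the denominator is positive. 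Granting these three lemmas, the theorem follows immediately, so the proof itself can be written as a short assembly once the Agreement counting is in place.
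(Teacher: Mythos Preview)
Your proposal is correct and follows essentially the same route as the paper: the paper also derives the theorem by assembling separate Validity, Agreement, and Termination lemmas, and its Agreement argument is exactly your counting via \cref{occuS} on the smallest snapshot, followed by maximizing $\lfloor x_i/(x_i-t)\rfloor$ at $x_i=n-t$. Your added caveats (monotonicity of the floor, the implicit assumption $n>2t$, and why the minimal snapshot governs all decisions) are points the paper glosses over, so if anything your write-up is slightly more careful than the original.
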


\begin{theorem}	
	There is no Algorithm that can solve $k$-set agreement for $k\le \lfloor \frac{n-t}{n-2t} \rfloor - 1$.
\end{theorem}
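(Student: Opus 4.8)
The plan is to establish this matching impossibility by an indistinguishability-and-partition argument in the spirit of \cref{th:lb}, adapted to the asynchronous shared-memory setting. Write $q=\lfloor\frac{n-t}{n-2t}\rfloor$ (I assume $t<n/2$, so that $n-2t>0$; the regime $t\ge n/2$ is already ruled out by the first impossibility result quoted above from \cite{AsynchKset}). Setting $k=q-1$, I would exhibit, for every candidate algorithm, a single admissible run in which $q=k+1$ pairwise distinct values are decided by correct processes, contradicting the Agreement property. The one extra twist compared with \cref{th:lb} is that asynchrony lets the adversary both hide the $t$ slowest processes from a decider and, in a companion run, disguise up to $t$ of the values a decider sees as having come from crashed processes; this is exactly why the relevant group size drops from $n-t$ to $n-2t$, producing the threshold $\frac{n-t}{n-2t}$ rather than $\frac{n}{n-t}$.

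For the construction, fix a set $S$ of $n-t$ processes and let the remaining $t$ processes $D$ crash before taking a single step, which already exhausts the failure budget and forces every process of $S$ to be correct. Partition $S$ into $q$ groups $G_1,\dots,G_q$, each of size at least $n-2t$; this is possible precisely because $q(n-2t)\le n-t$. In the main run $\alpha$ each group $G_i$ proposes a distinct value $v_i$ (this needs $|V|\ge q=k+1$, which holds), and I schedule every process of $S$ to perform its $update$ first and only afterwards its $snapshot$, so that by the Inclusion Property every process obtains the same view, namely the $n-t$ values written by $S$. For each $i$ I then build a companion run $\beta_i$ in which the processes of $S\setminus G_i$ write their same $\alpha$-values and immediately crash, while $G_i$ together with all of $D$ are correct and all propose $v_i$. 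Since $|S\setminus G_i|=(n-t)-|G_i|\le t$, run $\beta_i$ crashes at most $t$ processes and is admissible, and since every correct process of $\beta_i$ proposes $v_i$, Validity forces every correct process to decide $v_i$.

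Finally I would fix a representative $p_i\in G_i$ and argue that its local history is identical in $\alpha$ and $\beta_i$ up to its decision: its own input is $v_i$ in both, and its snapshot returns exactly the writes of $S$ in both (in $\beta_i$, $p_i$ is scheduled to snapshot after $S$ has written but before any process of $D$ writes, and the processes of $S\setminus G_i$ contribute the same values they wrote before crashing). By determinism $p_i$ therefore decides in $\alpha$ the same value it is forced to decide in $\beta_i$, namely $v_i$. Ranging over $i$, the single run $\alpha$ decides all of $v_1,\dots,v_q$, so at least $q=k+1$ distinct values are decided, contradicting $k$-set Agreement; hence no algorithm solves it for $k\le\lfloor\frac{n-t}{n-2t}\rfloor-1$. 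The step I expect to be delicate is the crash-budget bookkeeping for the companion runs: one must verify that disguising the non-$G_i$ part of a decider's view as crashed never needs more than $t$ failures, and it is precisely this $\le t$ constraint, together with all of $S$ being correct in $\alpha$, that pins the group size at $n-2t$ and yields the threshold $\lfloor\frac{n-t}{n-2t}\rfloor$.
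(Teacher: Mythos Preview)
Your proposal is correct and follows the same partition-and-indistinguishability route as the paper's proof: split $n-t$ active processes into $q=\lfloor\frac{n-t}{n-2t}\rfloor$ blocks of size at least $n-2t$, keep the other $t$ processes silent in the main run, and for each block $G_i$ build a companion run in which $G_i$ together with those $t$ silent processes are the only correct processes, all proposing $v_i$, so that Validity pins $p_i$'s decision and indistinguishability transports it back to $\alpha$. One point to tighten when you write it up: your schedule in $\alpha$ and $\beta_i$ is phrased as a single $update$ followed by a single $snapshot$, which is the shape of Algorithm~\ref{CSRW} but not of an arbitrary shared-memory protocol; for the general impossibility the common prefix of $\alpha$ and $\beta_i$ should be the full execution of $S$ up to the moment every process of $S$ has decided (the paper's time~$\tau$), with $S\setminus G_i$ crashing and $D$ starting only afterward---otherwise the indistinguishability claim as stated covers only one-shot write-then-snapshot algorithms.
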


\begin{proof}

Let us partition $n-t$  processes into $ \lfloor\frac{n-t}{n-2t}\rfloor$ subsets  $g_1, g_2, \ldots, g_{\frac{n-t}{n-2t}}$ of size of at least $n-2t$ processes, and let $t$ be the remaining processes.

We consider a run $\alpha$, where for each $g_i$, at least a correct process $p_i$ proposes its initial value $v_i$. Let $\tau$ be the time at which all the correct processes decide in $\alpha$. Before $\tau$, all the processes in $g_i$ write $v_i$, while the remaining processes $t$ do not take any step.

Let $\alpha_1$ be a run, where after $\tau$, for each $g_i$, with $i>1$, the processes in $g_i$ crash, and the remaining $t$, have an initial value $v_i$ and send it after $\tau$. 
All the correct processes in $g_1$, to ensure validity have to decide $v_1$.

A correct process $p_1$ in $g_1$ cannot distinguish between $\alpha$ and $\alpha_{1}$, thus decides the same value $v_1$ in both runs.

Generalizing the same argument,t for every $g_i$ in $\alpha$. There are at least $ \lfloor\frac{n-t}{n-2t}\rfloor$ different decided values.

\end{proof}

\subsection{$k$-Set Agreement in Crash Failures Asynchronous Message Passing Model}
It is shown in \cite{emulation}, that any wait-free algorithm based on atomic, single-writer (and multi-writer) multi-reader registers can be automatically emulated in message-passing systems, provided that at least a majority of the processors are not faulty. We have $ t < \frac{k-1}{2k-1}n$ then we have majority of correct since $\frac{2k-1}{k-1} > 2 $.
In order to simulate a  message-passing model with a shared memory
one, we consider that there exists a channel from a process $p$ and
$q$ where $p$ is the writer and $q$ is the reader. And a channel from
$q$ to $p$, where $q$ is the writer and $p$ is the reader. 

From  Theorem \ref{th-csrw}, we have the following Theorem.

\begin{theorem} \label{th-camp}
For $k >\frac{n-t}{n-2t}$, we have an algorithm ensuring $k$-set agreement in an asynchronous message passing model.
\end{theorem}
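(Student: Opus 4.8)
The plan is to prove this result entirely by transfer, reusing the shared-memory algorithm of \cref{th-csrw} rather than building anything new. The argument rests on three ingredients: first, confirm that the hypothesis $k > \frac{n-t}{n-2t}$ forces a strict majority of correct processes; second, observe that the only shared object used by \cref{CSRW} is the atomic snapshot $S$, which is wait-free implementable from atomic single-writer multi-reader registers; and third, invoke the emulation of \cite{emulation} to realise those registers over asynchronous message passing, which is wait-free precisely when a majority of processes are correct.

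The first step is a short calculation. The condition $k > \frac{n-t}{n-2t}$ is equivalent to $t < \frac{k-1}{2k-1}\,n$, and since $k \ge 2$ we have $\frac{k-1}{2k-1} < \frac12$, whence $t < n/2$ and at least $n-t > n/2$ processes are correct. This is exactly the precondition required by \cite{emulation}. I would then compose two standard constructions: the wait-free implementation of an atomic snapshot object from atomic registers, and the register emulation of \cite{emulation} over message passing. Under the established correct majority, both are wait-free and linearizable, so the composed implementation gives a message-passing realisation of $S$ whose $update()$ and $snapshot()$ operations satisfy the Termination, Atomicity and Inclusion properties that the proof of \cref{th-csrw} relies on.

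Finally, because the emulation preserves the sequential (linearized) behaviour of the snapshot and leaves the local control flow of each process untouched, every run of the emulated algorithm is indistinguishable, at the level of the values returned by $snapshot()$, from a run of \cref{CSRW} in genuine shared memory. Hence Validity, Agreement with the same bound $k > \frac{n-t}{n-2t}$, and Termination carry over verbatim, yielding the claimed message-passing algorithm. The delicate point, and the part I would check most carefully, is the preservation of Termination: the register emulation halts only under a correct majority and the snapshot-from-registers construction must itself be wait-free, so the obstacle is verifying that the composition of these two layers remains wait-free in exactly the regime $t < n/2$ guaranteed above, and that their linearizations compose into a genuinely atomic snapshot.
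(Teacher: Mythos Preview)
Your proposal is correct and follows essentially the same approach as the paper: verify that $k>\frac{n-t}{n-2t}$ (equivalently $t<\frac{k-1}{2k-1}n$) forces a correct majority, then invoke the register emulation of \cite{emulation} to transfer the shared-memory algorithm of \cref{th-csrw} to asynchronous message passing. The paper's argument is terser---it does not spell out the intermediate snapshot-from-registers layer or the linearizability-composition point you raise---but the strategy is identical.
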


%%% Local Variables:
%%% mode: latex
%%% TeX-master: "main"
%%% End:

%\input{extend.tex}
\section{Conclusion}
In this study, we have contributed to the understanding of $k$-set agreement in distributed systems, particularly focusing on scenarios with different types of failures. We addressed and filled certain gaps left by previous works such as \cite{AsynchKset} and \cite{DBLP:conf/netys/Delporte-Gallet20}, providing valuable insights into the solvability of the problem.

In the context of the Synchronous Message Passing model with Byzantine failures, we presented two algorithms that achieve $k$-set agreement. The first algorithm ensures $k = \lfloor \frac{n}{n-t} \rfloor + 1$ in just two rounds, demonstrating its efficiency and practicality. On the other hand, the second algorithm achieves optimality, ensuring $k \ge \lfloor \frac{n}{n-t} \rfloor$ in $t+1$ rounds.

We also present an algorithm for the Asynchronous Shared Memory model with crash failures, which solves $k$-set agreement for $k > \frac{n-t}{n-2t}$.

Furthermore, we extend these results by leveraging the equivalency of the models. The Synchronous Authenticated Byzantine model is equivalent to the Synchronous Crash model, and the Shared Memory model canbe simulated with the message passing model.

Although certain gaps remain open, our aim is to provide answers and establish a comprehensive understanding of the solvability of $k$-set agreement in both asynchronous and synchronous systems. We emphasize the importance of the validity property, which guarantees that if all correct processes propose the same value, that value will be decided upon.

\bibliographystyle{ACM-Reference-Format}
\bibliography{biblio}

%%% -*-BibTeX-*-
%%% Do NOT edit. File created by BibTeX with style
%%% ACM-Reference-Format-Journals [18-Jan-2012].

\begin{thebibliography}{20}

%%% ====================================================================
%%% NOTE TO THE USER: you can override these defaults by providing
%%% customized versions of any of these macros before the \bibliography
%%% command.  Each of them MUST provide its own final punctuation,
%%% except for \shownote{}, \showDOI{}, and \showURL{}.  The latter two
%%% do not use final punctuation, in order to avoid confusing it with
%%% the Web address.
%%%
%%% To suppress output of a particular field, define its macro to expand
%%% to an empty string, or better, \unskip, like this:
%%%
%%% \newcommand{\showDOI}[1]{\unskip}   % LaTeX syntax
%%%
%%% \def \showDOI #1{\unskip}           % plain TeX syntax
%%%
%%% ====================================================================

\ifx \showCODEN    \undefined \def \showCODEN     #1{\unskip}     \fi
\ifx \showDOI      \undefined \def \showDOI       #1{#1}\fi
\ifx \showISBNx    \undefined \def \showISBNx     #1{\unskip}     \fi
\ifx \showISBNxiii \undefined \def \showISBNxiii  #1{\unskip}     \fi
\ifx \showISSN     \undefined \def \showISSN      #1{\unskip}     \fi
\ifx \showLCCN     \undefined \def \showLCCN      #1{\unskip}     \fi
\ifx \shownote     \undefined \def \shownote      #1{#1}          \fi
\ifx \showarticletitle \undefined \def \showarticletitle #1{#1}   \fi
\ifx \showURL      \undefined \def \showURL       {\relax}        \fi
% The following commands are used for tagged output and should be
% invisible to TeX
\providecommand\bibfield[2]{#2}
\providecommand\bibinfo[2]{#2}
\providecommand\natexlab[1]{#1}
\providecommand\showeprint[2][]{arXiv:#2}

\bibitem[Afek et~al\mbox{.}(1990)]%
        {snapshot}
\bibfield{author}{\bibinfo{person}{Yehuda Afek}, \bibinfo{person}{Danny Dolev},
  \bibinfo{person}{Hagit Attiya}, \bibinfo{person}{Eli Gafni},
  \bibinfo{person}{Michael Merritt}, {and} \bibinfo{person}{Nir Shavit}.}
  \bibinfo{year}{1990}\natexlab{}.
\newblock \showarticletitle{Atomic Snapshots of Shared Memory.}
\newblock \bibinfo{journal}{\emph{J. ACM}}  \bibinfo{volume}{40},
  \bibinfo{pages}{1--13}.
\newblock
\urldef\tempurl%
\url{https://doi.org/10.1145/93385.93394}
\showDOI{\tempurl}


\bibitem[Aguilera and Toueg(1999)]%
        {aguilera1999}
\bibfield{author}{\bibinfo{person}{Marcos~Kawazoe Aguilera} {and}
  \bibinfo{person}{Sam Toueg}.} \bibinfo{year}{1999}\natexlab{}.
\newblock \showarticletitle{A simple bivalency proof that t-resilient consensus
  requires t+1 rounds}.
\newblock \bibinfo{journal}{\emph{Inform. Process. Lett.}}
  \bibinfo{volume}{71}, \bibinfo{number}{3} (\bibinfo{year}{1999}),
  \bibinfo{pages}{155--158}.
\newblock
\showISSN{0020-0190}
\urldef\tempurl%
\url{https://doi.org/10.1016/S0020-0190(99)00100-3}
\showDOI{\tempurl}


\bibitem[Attiya et~al\mbox{.}(1995)]%
        {emulation}
\bibfield{author}{\bibinfo{person}{Hagit Attiya}, \bibinfo{person}{Amotz
  Bar-Noy}, {and} \bibinfo{person}{Danny Dolev}.}
  \bibinfo{year}{1995}\natexlab{}.
\newblock \showarticletitle{Sharing Memory Robustly in Message-Passing
  Systems}.
\newblock \bibinfo{journal}{\emph{J. ACM}} \bibinfo{volume}{42},
  \bibinfo{number}{1} (\bibinfo{date}{jan} \bibinfo{year}{1995}),
  \bibinfo{pages}{124–142}.
\newblock
\showISSN{0004-5411}
\urldef\tempurl%
\url{https://doi.org/10.1145/200836.200869}
\showDOI{\tempurl}


\bibitem[Borowsky and Gafni(1993)]%
        {BG93}
\bibfield{author}{\bibinfo{person}{Elizabeth Borowsky} {and}
  \bibinfo{person}{Eli Gafni}.} \bibinfo{year}{1993}\natexlab{}.
\newblock \showarticletitle{Generalized {FLP} impossibility result for
  t-resilient asynchronous computations}. In
  \bibinfo{booktitle}{\emph{Proceedings of the Twenty-Fifth Annual {ACM}
  Symposium on Theory of Computing, May 16-18, 1993, San Diego, CA, {USA}}}.
  \bibinfo{publisher}{{ACM}}, \bibinfo{pages}{91--100}.
\newblock


\bibitem[Bouzid et~al\mbox{.}(2016)]%
        {bouzid2016necessary}
\bibfield{author}{\bibinfo{person}{Zohir Bouzid}, \bibinfo{person}{Damien
  Imbs}, {and} \bibinfo{person}{Michel Raynal}.}
  \bibinfo{year}{2016}\natexlab{}.
\newblock \showarticletitle{A necessary condition for Byzantine k-set
  agreement}.
\newblock \bibinfo{journal}{\emph{Inform. Process. Lett.}}
  \bibinfo{volume}{116}, \bibinfo{number}{12} (\bibinfo{year}{2016}),
  \bibinfo{pages}{757--759}.
\newblock


\bibitem[Chaudhuri(1993)]%
        {chaudhuri1993more}
\bibfield{author}{\bibinfo{person}{Soma Chaudhuri}.}
  \bibinfo{year}{1993}\natexlab{}.
\newblock \showarticletitle{More choices allow more faults: Set consensus
  problems in totally asynchronous systems}.
\newblock \bibinfo{journal}{\emph{Information and Computation}}
  \bibinfo{volume}{105}, \bibinfo{number}{1} (\bibinfo{year}{1993}),
  \bibinfo{pages}{132--158}.
\newblock


\bibitem[Chaudhuri et~al\mbox{.}(2000)]%
        {CHLT00}
\bibfield{author}{\bibinfo{person}{Soma Chaudhuri}, \bibinfo{person}{Maurice
  Herlihy}, \bibinfo{person}{Nancy~A. Lynch}, {and} \bibinfo{person}{Mark~R.
  Tuttle}.} \bibinfo{year}{2000}\natexlab{}.
\newblock \showarticletitle{Tight bounds for \emph{k}-set agreement}.
\newblock \bibinfo{journal}{\emph{J. {ACM}}} \bibinfo{volume}{47},
  \bibinfo{number}{5} (\bibinfo{year}{2000}), \bibinfo{pages}{912--943}.
\newblock
\urldef\tempurl%
\url{https://doi.org/10.1145/355483.355489}
\showDOI{\tempurl}


\bibitem[Civit et~al\mbox{.}(2023)]%
        {CivitGGKV23}
\bibfield{author}{\bibinfo{person}{Pierre Civit}, \bibinfo{person}{Seth
  Gilbert}, \bibinfo{person}{Rachid Guerraoui}, \bibinfo{person}{Jovan
  Komatovic}, {and} \bibinfo{person}{Manuel Vidigueira}.}
  \bibinfo{year}{2023}\natexlab{}.
\newblock \showarticletitle{On the Validity of Consensus}. In
  \bibinfo{booktitle}{\emph{Proceedings of the 2023 {ACM} Symposium on
  Principles of Distributed Computing, {PODC} 2023, Orlando, FL, USA, June
  19-23, 2023}}, \bibfield{editor}{\bibinfo{person}{Rotem Oshman},
  \bibinfo{person}{Alexandre Nolin}, \bibinfo{person}{Magn{\'{u}}s~M.
  Halld{\'{o}}rsson}, {and} \bibinfo{person}{Alkida Balliu}} (Eds.).
  \bibinfo{publisher}{{ACM}}, \bibinfo{pages}{332--343}.
\newblock
\urldef\tempurl%
\url{https://doi.org/10.1145/3583668.3594567}
\showDOI{\tempurl}


\bibitem[De~Prisco et~al\mbox{.}(1999)]%
        {AsynchKset}
\bibfield{author}{\bibinfo{person}{Roberto De~Prisco}, \bibinfo{person}{Dahlia
  Malkhi}, {and} \bibinfo{person}{Michael~K. Reiter}.}
  \bibinfo{year}{1999}\natexlab{}.
\newblock \showarticletitle{On K-Set Consensus Problems in Asynchronous
  Systems}. In \bibinfo{booktitle}{\emph{Proceedings of the Eighteenth Annual
  ACM Symposium on Principles of Distributed Computing}} (Atlanta, Georgia,
  USA) \emph{(\bibinfo{series}{PODC '99})}. \bibinfo{publisher}{Association for
  Computing Machinery}, \bibinfo{address}{New York, NY, USA},
  \bibinfo{pages}{257–265}.
\newblock
\showISBNx{1581130996}
\urldef\tempurl%
\url{https://doi.org/10.1145/301308.301368}
\showDOI{\tempurl}


\bibitem[Delporte-Gallet et~al\mbox{.}(2022)]%
        {SSS}
\bibfield{author}{\bibinfo{person}{Carole Delporte-Gallet},
  \bibinfo{person}{Hugues Fauconnier}, \bibinfo{person}{Michel Raynal}, {and}
  \bibinfo{person}{Mouna Safir}.} \bibinfo{year}{2022}\natexlab{}.
\newblock \showarticletitle{Optimal Algorithms For Synchronous Byzantine k-Set
  Agreement}. \bibinfo{publisher}{Springer-Verlag}, \bibinfo{address}{Berlin,
  Heidelberg}.
\newblock
\showISBNx{978-3-031-21016-7}
\urldef\tempurl%
\url{https://doi.org/10.1007/978-3-031-21017-4_12}
\showDOI{\tempurl}


\bibitem[Delporte{-}Gallet et~al\mbox{.}(2020)]%
        {DBLP:conf/netys/Delporte-Gallet20}
\bibfield{author}{\bibinfo{person}{Carole Delporte{-}Gallet},
  \bibinfo{person}{Hugues Fauconnier}, {and} \bibinfo{person}{Mouna Safir}.}
  \bibinfo{year}{2020}\natexlab{}.
\newblock \showarticletitle{Byzantine k-Set Agreement}. In
  \bibinfo{booktitle}{\emph{Networked Systems - 8th International Conference,
  {NETYS} 2020, Marrakech, Morocco, June 3-5, 2020, Proceedings}}
  \emph{(\bibinfo{series}{Lecture Notes in Computer Science},
  Vol.~\bibinfo{volume}{12129})}, \bibfield{editor}{\bibinfo{person}{Chryssis
  Georgiou} {and} \bibinfo{person}{Rupak Majumdar}} (Eds.).
  \bibinfo{publisher}{Springer}, \bibinfo{pages}{183--191}.
\newblock
\urldef\tempurl%
\url{https://doi.org/10.1007/978-3-030-67087-0\_12}
\showDOI{\tempurl}


\bibitem[Dolev and Strong(1983)]%
        {dolev1983authenticated}
\bibfield{author}{\bibinfo{person}{Danny Dolev} {and}
  \bibinfo{person}{H.~Raymond Strong}.} \bibinfo{year}{1983}\natexlab{}.
\newblock \showarticletitle{Authenticated algorithms for Byzantine agreement}.
\newblock \bibinfo{journal}{\emph{SIAM J. Comput.}} \bibinfo{volume}{12},
  \bibinfo{number}{4} (\bibinfo{year}{1983}), \bibinfo{pages}{656--666}.
\newblock


\bibitem[Fischer et~al\mbox{.}(1985)]%
        {fischer1982impossibility}
\bibfield{author}{\bibinfo{person}{Michael~J. Fischer},
  \bibinfo{person}{Nancy~A. Lynch}, {and} \bibinfo{person}{Mike Paterson}.}
  \bibinfo{year}{1985}\natexlab{}.
\newblock \showarticletitle{Impossibility of Distributed Consensus with One
  Faulty Process}.
\newblock \bibinfo{journal}{\emph{J. {ACM}}} \bibinfo{volume}{32},
  \bibinfo{number}{2} (\bibinfo{year}{1985}), \bibinfo{pages}{374--382}.
\newblock


\bibitem[Gafni and Guerraoui(2011)]%
        {GG11}
\bibfield{author}{\bibinfo{person}{Eli Gafni} {and} \bibinfo{person}{Rachid
  Guerraoui}.} \bibinfo{year}{2011}\natexlab{}.
\newblock \showarticletitle{Generalized Universality}. In
  \bibinfo{booktitle}{\emph{{CONCUR} 2011 - Concurrency Theory - 22nd
  International Conference, {CONCUR} 2011, Aachen, Germany, September 6-9,
  2011. Proceedings}} \emph{(\bibinfo{series}{Lecture Notes in Computer
  Science}, Vol.~\bibinfo{volume}{6901})},
  \bibfield{editor}{\bibinfo{person}{Joost{-}Pieter Katoen} {and}
  \bibinfo{person}{Barbara K{\"{o}}nig}} (Eds.). \bibinfo{publisher}{Springer},
  \bibinfo{pages}{17--27}.
\newblock
\urldef\tempurl%
\url{https://doi.org/10.1007/978-3-642-23217-6\_2}
\showDOI{\tempurl}


\bibitem[Herlihy and Shavit(1993)]%
        {HS93}
\bibfield{author}{\bibinfo{person}{Maurice Herlihy} {and} \bibinfo{person}{Nir
  Shavit}.} \bibinfo{year}{1993}\natexlab{}.
\newblock \showarticletitle{The asynchronous computability theorem for
  t-resilient tasks}. In \bibinfo{booktitle}{\emph{Proceedings of the
  Twenty-Fifth Annual {ACM} Symposium on Theory of Computing, May 16-18, 1993,
  San Diego, CA, {USA}}}. \bibinfo{publisher}{{ACM}},
  \bibinfo{pages}{111--120}.
\newblock


\bibitem[Pease et~al\mbox{.}(1980)]%
        {PSL80}
\bibfield{author}{\bibinfo{person}{Marshall~C. Pease},
  \bibinfo{person}{Robert~E. Shostak}, {and} \bibinfo{person}{Leslie Lamport}.}
  \bibinfo{year}{1980}\natexlab{}.
\newblock \showarticletitle{Reaching Agreement in the Presence of Faults}.
\newblock \bibinfo{journal}{\emph{J. {ACM}}} \bibinfo{volume}{27},
  \bibinfo{number}{2} (\bibinfo{year}{1980}), \bibinfo{pages}{228--234}.
\newblock
\urldef\tempurl%
\url{https://doi.org/10.1145/322186.322188}
\showDOI{\tempurl}


\bibitem[Raynal(2010)]%
        {raynal2010fault}
\bibfield{author}{\bibinfo{person}{Michel Raynal}.}
  \bibinfo{year}{2010}\natexlab{}.
\newblock \showarticletitle{Fault-tolerant agreement in synchronous
  message-passing systems}.
\newblock \bibinfo{journal}{\emph{Synthesis Lectures on Distributed Computing
  Theory}} \bibinfo{volume}{1}, \bibinfo{number}{1} (\bibinfo{year}{2010}),
  \bibinfo{pages}{1--189}.
\newblock


\bibitem[Saks and Zaharoglou(2000)]%
        {saks2000wait}
\bibfield{author}{\bibinfo{person}{Michael Saks} {and} \bibinfo{person}{Fotios
  Zaharoglou}.} \bibinfo{year}{2000}\natexlab{}.
\newblock \showarticletitle{Wait-free k-set agreement is impossible: The
  topology of public knowledge}.
\newblock \bibinfo{journal}{\emph{SIAM J. Comput.}} \bibinfo{volume}{29},
  \bibinfo{number}{5} (\bibinfo{year}{2000}), \bibinfo{pages}{1449--1483}.
\newblock


\bibitem[Srikanth and Toueg(1987)]%
        {SrikanthT87}
\bibfield{author}{\bibinfo{person}{T.~K. Srikanth} {and} \bibinfo{person}{Sam
  Toueg}.} \bibinfo{year}{1987}\natexlab{}.
\newblock \showarticletitle{Simulating Authenticated Broadcasts to Derive
  Simple Fault-Tolerant Algorithms}.
\newblock \bibinfo{journal}{\emph{Distributed Comput.}} \bibinfo{volume}{2},
  \bibinfo{number}{2} (\bibinfo{year}{1987}), \bibinfo{pages}{80--94}.
\newblock
\urldef\tempurl%
\url{https://doi.org/10.1007/BF01667080}
\showDOI{\tempurl}


\bibitem[Turek and Shasha(1992)]%
        {consensus}
\bibfield{author}{\bibinfo{person}{J. Turek} {and} \bibinfo{person}{D.
  Shasha}.} \bibinfo{year}{1992}\natexlab{}.
\newblock \showarticletitle{The many faces of consensus in distributed
  systems}.
\newblock \bibinfo{journal}{\emph{Computer}} \bibinfo{volume}{25},
  \bibinfo{number}{6} (\bibinfo{year}{1992}), \bibinfo{pages}{8--17}.
\newblock
\urldef\tempurl%
\url{https://doi.org/10.1109/2.153253}
\showDOI{\tempurl}


\end{thebibliography}

\end{document}